\newtheorem{theorem}{Theorem}
\newtheorem{lemma}[theorem]{Lemma}
\newtheorem{prop}[theorem]{Proposition}
\newtheorem{corollary}[theorem]{Corollary}
\renewcommand{\vec}[1]{\mathbf{#1}}
\def\ev{\mathop{\rm eval}\nolimits}
\long\def\symbolfootnote[#1]#2{\begingroup
\def\thefootnote{\fnsymbol{footnote}}\footnote[#1]{#2}\endgroup}
\newcounter{example}
\newenvironment{example}[1][]{\refstepcounter{example}\par\medskip\noindent%
   \textbf{Example~\theexample. }}{\medskip}
\newcounter{remark}
\newenvironment{remark}[1][]{\refstepcounter{remark}\par\medskip\noindent%
   \textbf{Remark~\theremark. }}{\medskip}
\title{Achieving Capacity of Bi-Directional Tandem Collision Network by Joint Medium-Access Control and Channel-Network Coding}
\author{Kenneth W. Shum and Chi Wan Sung}
\begin{document}

\maketitle

\begin{abstract}
In ALOHA-type packetized network, the transmission times of packets follow a stochastic process. In this paper, we advocate a deterministic approach for channel multiple-access. Each user is statically assigned a periodic protocol signal, which takes value either zero or one, and transmit packets whenever the value of the protocol signal is equal to one.  On top of this multiple-access protocol, efficient channel coding and network coding schemes are devised. We illustrate the idea by constructing a transmission scheme for the tandem collision network, for both slot-synchronous and slot-asynchronous systems. This cross-layer approach is able to achieve the capacity region when the network is bi-directional.
\end{abstract}

\begin{keywords} collision channel, protocol sequence, tandem network, bi-directional network, network coding.
\end{keywords}

\section{Introduction}
\symbolfootnote[0]{This work was partially supported by a grant from the Research Grants Council of the Hong Kong Special Administrative Region under Project 417909, and a grant from City University of Hong Kong under project 7002386.}
\symbolfootnote[0]{Kenneth Shum is with Dept. of Information Engineering, the Chinese University of Hong Kong, Shatin, Hong Kong.}
\symbolfootnote[0]{Chi Wan Sung is with Dept. of Electronic Engineering, City University of Hong Kong, Tat Chee Ave, Kowloon Tong, Hong Kong.}
\symbolfootnote[0]{Emails: kshum2009@gmail.com, albert.sung@cityu.edu.hk.}

In their study of  multiple-access collision channel without feedback, Massey and Mathys show  that the capacity region can be achieved by deterministic channel access method~\cite{MasseyMathys85}. It contrasts with the more traditional multiple-access scheme like pure ALOHA or slotted ALOHA\cite{Abramson70, Abramson73}, where transmission times of packets form a random process. In this paper, we extend the transmission scheme by Massey and Mathys to tandem network, in which nodes are located on a straight line. Nodes that are more than two hops away do not interfere with each other.
If a node receives two packets that overlap in time, either partially or completely, both packets are assumed erased and unrecoverable.
This model is applicable to wireless sensor network along a highway or river for instance. Since antenna system that transmits and receives at the same time is often too costly to implement, we assume that each node operates in half-duplex mode; when a node is not transmitting, it listens to its two neighboring nodes.

In our proposed transmission scheme, the transmission times of packets of each node follow a pre-assigned deterministic and periodic pattern.
Implementation of this kind of deterministic channel accessing scheme is particularly easy. We can simply store the whole pattern of transmission times in memory, read it out repeatedly, and send out packets accordingly.  No coordination between nodes and no centralized packet scheduling is needed. Network operations are thus fully distributed. This feature is especially suitable for low-complexity wireless sensor network.


Related work on tandem network can be found in~\cite{Pakzad2005, Niesen2007, Sagduyu06, Sagduyu07}. In~\cite{Pakzad2005}, the authors consider one source node and one destination node, that are connected by a series of intermediate relay nodes in tandem, and model each hop as an erasure channel. A coding scheme which can approach the min-cut bound~\cite{ThomasCover} is proposed. Similar system setting with general discrete memoryless channel in each hop is considered in~\cite{Niesen2007}, and scaling laws for capacity is discussed. The networks considered in~\cite{Pakzad2005} and~\cite{Niesen2007} are full-duplex wireline network. In~\cite{Sagduyu06, Sagduyu07},
half-duplex wireless network is investigated, and the effect of
random multiple-access on network coding~\cite{ACLY00} is addressed.
In this paper, we devise a transmission scheme that combines  multiple-access protocol,  erasure-correction coding and network coding efficiently. We will show that this cross-layer design can achieve the (zero-error) capacity region of bi-directional tandem collision network, and is thus optimal.


This paper is organized as follows. The system model is introduced in Section~\ref{sec:system}. We consider both slot-synchronous and slot-asynchronous case.
A transmission scheme that incorporates multiple-access, erasure correction and network coding for slot-synchronous system is presented in Sections~\ref{sec:achievable}. Then we show that it can be extended to slot-asynchronous system. In order to show that the proposed transmission scheme is optimal for the bi-directional tandem collision network, we derive an outer bound on capacity region in Section~\ref{sec:converse}, and show that the achievable rate region and the outer bound coincide. Comparison with some random access schemes is carried out in Section~\ref{sec:example}. We close with some concluding remarks in Section~\ref{sec:conclusion}.

\section{System Model and Definitions} \label{sec:system}

We consider $M$ nodes located on a straight line. Each node broadcasts signal to its two neighbors, one on the left and one on the right.
It is assumed that the transmit range of each node is adjusted so that there is no interference to the nodes that are two or more hops away.  We model the network as a directed graph with vertex set $\mathcal{V} = \{1,2,\ldots, M\}$ and edge set
\begin{align*}
\mathcal{E} &= \{ (i,i+1):\, i=1,2,\ldots, M-1\} \\
& \quad \cup \{ (i+1,i):\, i=1,2,\ldots, M-1\}.
\end{align*}
We assume that all nodes operate in half-duplex mode, meaning that each of them cannot transmit and receive at the same time.

Suppose that there are $N$ independent data sources, and each data source is associated with a node. For $j=1,2,\ldots, N$, we let $\alpha(j)$ be the vertex to which the $j$-th source is attached. The function $\alpha:\{1,\ldots, N\} \rightarrow \mathcal{V}$ is called the {\em source mapping}. Each source is multicast  to a subset of nodes in~$\mathcal{V}$.
Define the {\em destination mapping}, $\beta$, which is a function from $\{1,2,\ldots, N \}$ to $2^\mathcal{V}$, such that for $j=1,2,\ldots, N$, source $j$ is demanded by all nodes belonging to set $\beta(j)$.
Node $i$ is called a {\em source node} if $\alpha(j) = i$ for some source~$j$.
If $i \in \beta(j)$ for some $j$, then node~$i$ is called a {\em destination node}. We remark that a node may be a source node and a destination node simultaneously.
A node that is neither a source nor a destination node is called a {\em pure relay node}.

\begin{example}
(Two-way network) The two nodes at the two ends send data to each other, and the nodes in the middle are pure relay nodes. Fig.~\ref{fig:linear_network1} illustrates an example for $M = 4$. There are two data sources, and so we have $N=2$. Nodes 1 wants to send message $W_1$ to node 4, and node 4 wants to send message $W_2$ to node~1. In this example, we have $\alpha(1) = 1$, $\alpha(2) = 4$, $\beta(1) = \{4\}$, $\beta(2) = \{1\}$.  The message in square bracket signifies that it is demanded by the associated node. Nodes 2 and 3 are pure relay nodes.
\end{example}

\begin{figure}
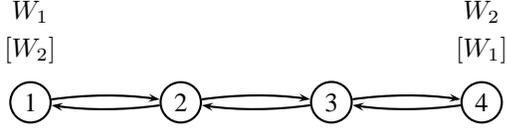

    \begin{center}
        \pspicture(-1,-.5)(7,2)
     \cnodeput(0,0){N1}{1}
     \cnodeput(2,0){N2}{2}
     \cnodeput(4,0){N3}{3}
     \cnodeput(6,0){N4}{4}

     \ncarc{->}{N1}{N2}
     \ncarc{->}{N2}{N1}
     \ncarc{->}{N2}{N3}
     \ncarc{->}{N3}{N2}
     \ncarc{->}{N3}{N4}
     \ncarc{->}{N4}{N3}
     \rput[u](0,0.7){$[W_2]$}
     \rput[u](0,1.2){$W_1$}
     \rput[u](6,0.7){$[W_1]$}
     \rput[u](6,1.2){$W_2$}
        \endpspicture
    \end{center}
\caption{A Two-way Network}
\label{fig:linear_network1}
\end{figure}

\begin{example}
(Bi-directional multicast network) The network consists of five nodes.
Two sources are associated with nodes 2 and~4. Node~2 wants to send message $W_1$ to nodes 1 and 5, and node 4 wants to send message $W_2$ to nodes 1 and~5 (Fig.~\ref{fig:linear_network2}).
The system parameters are $M=5$, $N=2$, $\alpha(1) = 2$, $\alpha(2) = 4$,  and $\beta(1) = \beta(2) = \{1,5\}$.  Node~3 acts as a pure relay node.

\begin{figure}
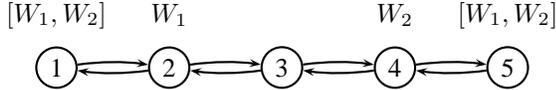

    \begin{center}
        \pspicture(-.5,-.5)(6.5,2)
     \cnodeput(0,0){N1}{1}
     \cnodeput(1.5,0){N2}{2}
     \cnodeput(3,0){N3}{3}
     \cnodeput(4.5,0){N4}{4}
     \cnodeput(6,0){N5}{5}

     \ncarc{->}{N1}{N2}
     \ncarc{->}{N2}{N1}
     \ncarc{->}{N2}{N3}
     \ncarc{->}{N3}{N2}
     \ncarc{->}{N3}{N4}
     \ncarc{->}{N4}{N3}
     \ncarc{->}{N4}{N5}
     \ncarc{->}{N5}{N4}
     \rput[u](0,0.7){$[W_1, W_2]$}
     \rput[u](6,0.7){$[W_1, W_2]$}
     \rput[u](1.5,0.7){$W_1$}
     \rput[u](4.5,0.7){$W_2$}
        \endpspicture
    \end{center}
\caption{Bi-directional Multicast  Network}
\label{fig:linear_network2}
\end{figure}
\end{example}

We assume that the data stream is packetized, and the durations of all packets are $T$ seconds. Each packet may assume $Q$ possible values, for some positive integer~$Q$. Thus, each packet carries $\lfloor \log_2(Q) \rfloor $ bits.
Consider node $i$ and its two neighboring nodes $i-1$ and $i+1$. If either node $i-1$ or $i+1$ transmits a packet, while the other remains silent during the whole packet duration, the packet is assumed to be received with no error at node~$i$.
However, if both node $i-1$ and $i+1$ transmit and the two packets overlap either partially or completely, then both packets are considered lost and unrecoverable at node~$i$. In this case, we say that there is a collision at node~$i$.   As an example, suppose that node $i-1$ transmits a packet at time $t_0$. This packet is successfully received by node $i$ if and only if node $i$ is in receive mode between time $t_0$ and $t_0+T$, and node $i+1$ does not transmit any packet between time  $t_0 - T$ and $t_0+T$. We call this network a {\em tandem collision network}.

We adopt the approach in~\cite{MasseyMathys85} and impose the
restriction that the packet transmission times are independent of
the messages to be sent or forwarded, and independent of how the
other nodes access the channel. This can be accomplished by statically assigning each node a {\em protocol signal} for channel access. The protocol signal for node $i$, $s_i(t)$, is a deterministic and periodic signal of period $P_i$ second, and is equal to either zero or one for all~$t$. A protocol signal is equal to one  over some semi-open intervals whose lengths are integral multiple of~$T$. Node $i$ is required to transmit packets whenever $s_i(t)=1$, and remain silent whenever $s_i(t) = 0$. Due to the lack of common time reference among the nodes, the protocol signals are subject to delay offsets. We denote the delay offset of node $i$ by $\delta_i$, which takes value between 0 and $P_i$. We assume that the delay offsets are arbitrarily chosen but fixed throughout the communication session. Two protocol signals with delay offsets $\delta_1$ and $\delta_2$ are plotted in Fig.~\ref{fig:protocol_signal}. A packet is sent within the duration of each ``square pulse.'' We can see from Fig.~\ref{fig:protocol_signal} that the first two packets of both nodes are collided, but the third packet from node~2 can be received successfully at node~1.

We define the duty factor of $s_i(t)$ by
\begin{equation}
 f_i \triangleq \frac{1}{P_i} \int_0^{P_i} s_i(t) \,dt.
 \label{eq:def_duty_factor}
\end{equation}
It measures the fraction of time that node $i$ is transmitting.

\begin{figure}
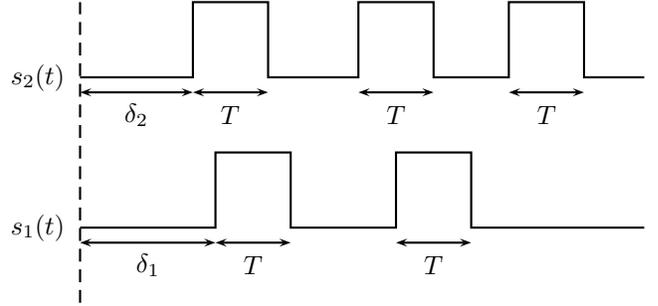

    \begin{center}
        \pspicture(-1,-1)(8,4)
    \psline[linestyle=dashed](0,-1)(0,3)

    \psline(0,0)(1.8,0)(1.8,1)(2.8,1)(2.8,0)(4.2,0)(4.2,1)(5.2,1)(5.2,0)(7.5,0)
    \psline(0,2)(1.5,2)(1.5,3)(2.5,3)(2.5,2)(3.7,2)(3.7,3)(4.7,3)(4.7,2)(5.7,2)(5.7,3)(6.7,3)(6.7,2)(7.5,2)

    \rput[r](-0.2,0){$s_1(t)$}
    \rput[r](-0.2,2){$s_2(t)$}
    \rput[u](0.9,-0.5){$\delta_1$}
    \psline[arrows=<->](0,-0.2)(1.8,-0.2)
    \rput[u](2.3,-0.5){$T$}
    \psline[arrows=<->](1.8,-0.2)(2.8,-0.2)
    \rput[u](4.7,-0.5){$T$}
    \psline[arrows=<->](4.2,-0.2)(5.2,-0.2)

    \rput[u](0.75,1.5){$\delta_2$}
    \psline[arrows=<->](0,1.8)(1.5,1.8)
    \rput[u](2,1.5){$T$}
    \psline[arrows=<->](1.5,1.8)(2.5,1.8)
    \rput[u](4.2,1.5){$T$}
    \psline[arrows=<->](3.7,1.8)(4.7,1.8)
    \rput[u](6.2,1.5){$T$}
    \psline[arrows=<->](5.7,1.8)(6.7,1.8)

            \endpspicture
    \end{center}
\caption{Protocol Signal and Relative Delay Offsets}
\label{fig:protocol_signal}
\end{figure}

We consider both slot-synchronous  and slot-asynchronous systems. In the {\em slot-synchronous} case, the delay offsets $\delta_i$, for $i=1,2,\ldots, M$, are integral multiples of packet duration~$T$. In the {\em slot-asynchronous case}, the delay offset $\delta_i$ is an arbitrary real number between 0 and $P_i$, for $i=1,2,\ldots, M$.
A time interval in the form $[kT, (k+1)T)$ for some integer $k$, with reference to the local clock, is called a {\em time slot}.
We say that a system is {\em time-slotted} if each packet is sent within a time slot. This means that the protocol signal in a time-slotted system satisfies
\begin{equation}
 s_i(t) = s_i( \lfloor t/T \rfloor T), \label{eq:time_slotted}
\end{equation}
where $\lfloor x \rfloor$ refers to the smallest integer larger than or equal to~$x$. It can be easily deduced from~\eqref{eq:time_slotted} that in a time-slotted system, the period $P_i$ of protocol signal $s_i(t)$ is an integral multiple of~$T$.

\begin{remark}
The notions of slot-synchronous system and time-slotted system are {\em not} the same and should be distinguished. ``Time-slotted'' is an attribute pertaining to the protocol signal set. ``Slot-synchronous'' is about synchronization of clocks among the nodes.
Fig.~\ref{fig:protocol_signal} provides an example that is neither  slot-synchronous nor time-slotted. It is not slot-synchronous because the difference between the two delay offsets, $\delta_2-\delta_1$, is not an integral multiple of~$T$. It is not time-slotted because the gap between the two pulses in $s_1(t)$ is not an integral multiple of~$T$.
\end{remark}

In a time-slotted but slot-asynchronous system, partial overlap of packets is inevitable. However, if a system is both time-slotted and slot-synchronous, then whenever two packets overlap, they overlap completely. For time-slotted system, we can compactly specify the protocol signals by binary sequences, called {\em protocol sequences}. A protocol signal of period $p_i T$ in a time-slotted system, where $p_i$ is an integer, corresponds to a discrete-time periodic sequence with period $p_i$, denoted by $s_i[k]$, such that
\[
 s_i[k] = 1 \text{ if and only if } s_i(t) = 1 \text{ for } t \in [kT, (k+1)T).
\]
We will abuse the use of language and denote both protocol signal and protocol sequence by the letter ``$s$''. Nevertheless, we can distinguish protocol signal and protocol sequences by using parenthesis for continuous time index and square bracket for discrete time index.

We assume that the protocol signals are jointly designed and known to all nodes, and will not be changed throughout the communication session in concern. In this paper, we do {\em not} assume any packet header, and hence a packet only carries data and does not contain any sender's information. As we will see in a later section, the sender of a packet can be identified by some sliding-window algorithm, which requires the knowledge neighboring nodes' protocol signals. We will also show that the maximal system throughput can be achieved without any packet header. Nevertheless, if packet headers are present, as in most practical systems, we can relax the requirement that the protocol signals are known to all nodes.

\smallskip

The block diagram of a node which is both a source and a destination is shown in Fig.~\ref{fig:block_diagram1}. The encoder produces packets as a function of the data source and all previously received packets. The transmission times of the packets are determined by the protocol signal generator, which  is a stand-alone device without any input from the source, the decoder, or the channel. The computation of  decoder's output is based on the content of the successfully received packets as well as the data from the local source. In Fig.~\ref{fig:block_diagram1}, flow of data is indicated by solid arrow, and flow of control signal is indicated by dashed arrow.
The block diagram of a source node is the same as in Fig.~\ref{fig:block_diagram1} except that the data sink is absent. Fig.~\ref{fig:block_diagram2} shows  the block diagram of a destination node. The output of the decoder is feedback and re-encoded. For a pure relay node, the block diagram is the same as in Fig.~\ref{fig:block_diagram2} except that there is no sink.

\begin{figure}
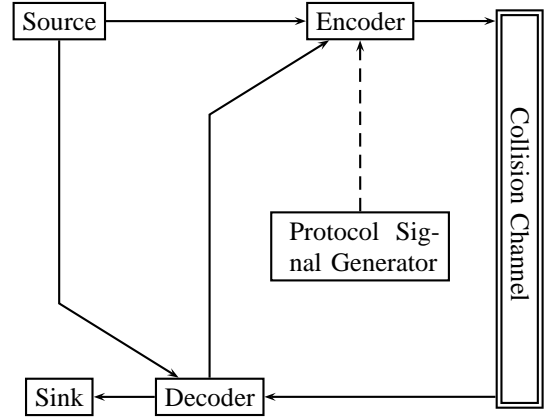

    \begin{center}
        \pspicture(-1,-1)(7,6)
\rput(0,0){\rnode{Sink}{\psframebox{Sink}}}
\rput(0,5){\rnode{Source}{\psframebox{Source}}}
\rput(2,0){\rnode{Dec}{\psframebox{Decoder}}}
\rput(4,5){\rnode{Enc}{\psframebox{Encoder}}}
\rput(4,2){\rnode{PSG}{\psframebox{ \parbox[c]{2cm}{Protocol Signal Generator}}}}

\ncline{->}{Source}{Enc}
\ncline{->}{Dec}{Sink}
\ncline[linestyle=dashed]{->}{PSG}{Enc}
\ncdiagg[angleA=90,armA=3.5cm, arrows=->]{Dec}{Enc}
\ncdiagg[angleA=-90,armA=3.5cm, arrows=->]{Source}{Dec}
\rput(5.8,0){\rnode{A}{}}
\ncline{->}{A}{Dec}
\rput(5.8,5){\rnode{B}{}}
\ncline{->}{Enc}{B}

\rput[l](5.8,2.5){\rnode{Channel}{\psdblframebox{\rotateright{\ \ \ \ \ \ \ \ \   Collision Channel \ \ \ \ \ \ \ \ \  }}}}

    \endpspicture
    \end{center}
\caption{Block Diagram of a Source and Destination Node}
\label{fig:block_diagram1}
\end{figure}

\begin{figure}
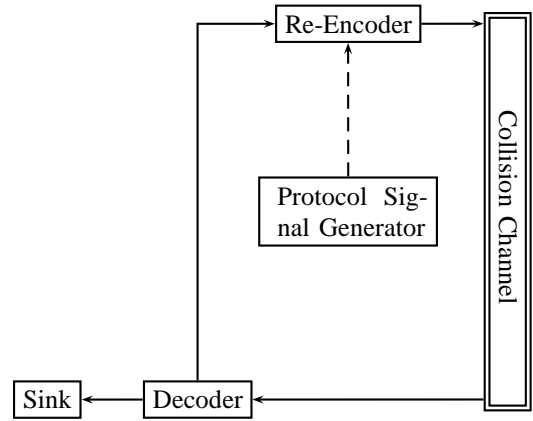

    \begin{center}
        \pspicture(-1,-1)(7,6)
\rput(0,0){\rnode{Sink}{\psframebox{Sink}}}
\rput(2,0){\rnode{Dec}{\psframebox{Decoder}}}
\rput(4,5){\rnode{Enc}{\psframebox{Re-Encoder}}}
\rput(4,2.5){\rnode{PSG}{\psframebox{ \parbox[c]{2cm}{Protocol Signal Generator}}}}

\ncline{->}{Dec}{Sink}
\ncline[linestyle=dashed]{->}{PSG}{Enc}
\ncangle[angleA=90,angleB=180,armA=3.5cm, arrows=->]{Dec}{Enc}
\rput(5.8,0){\rnode{A}{}}
\ncline{->}{A}{Dec}
\rput(5.8,5){\rnode{B}{}}
\ncline{->}{Enc}{B}

\rput[l](5.8,2.5){\rnode{Channel}{\psdblframebox{\rotateright{\ \ \ \ \ \ \ \ \   Collision Channel \ \ \ \ \ \ \ \ \ }}}}

    \endpspicture
    \end{center}
\caption{Block Diagram of a Destination Node}
\label{fig:block_diagram2}
\end{figure}

We describe the channel model formally as follows. We consider $nT$ seconds of transmission time, where $n$ is an integer and $T$ is the time duration of a packet. Each data source is discrete memoryless  over a $Q$-ary alphabet~$\Omega$.
For $j=1,2,\ldots, N$, source $j$ produces source symbols at a rate of $R_j$ symbols per packet duration. Each source symbol is chosen uniformly and independently from~$\Omega$.  Let $W_j$ denote a message from source~$j$. In a duration of $n$ packet durations.
$W_j$ may assume any value in $\mathcal{W}_j \triangleq \{1,2,\ldots, Q^{R_jn} \}$.
Since it is assumed that each packet can store a $Q$-ary symbol as well, we can store one source symbol in a packet.

For given delay offsets $\delta_1, \ldots, \delta_M$, the encoding and decoding functions are specified as follows.
If node $i$ is a source node, associated with source $j$, the packet transmitted by node $i$ at time $t$ is obtained by applying the encoding function
\[
 f_{i,t}: \mathcal{W}_j \times \Omega^{r_i(t)} \rightarrow \Omega,
\]
where $r_i(t)$ is the number of successfully received packets by node $i$ up to time~$t$. As the delay offsets are fixed, $r_i(t)$ is known and well-defined.
If node $i$ is not a source node, then the encoding function for the packet at time $t$ is
\[
 f_{i,t}: \Omega^{r_{i}(t)} \rightarrow \Omega.
\]

Suppose source $j$ is demanded by node $i$. If node~$i$  is associated  with another source, say source $j'$,  then at the end of $n$ packet durations, node~$i$ decodes $W_j$ by
\[
 g_{i,j} : \Omega_{j'} \times \Omega^{r_i(nT)} \rightarrow \Omega_j.
\]
If node $i$ is not a source node and wants to decode the data from source $j$, then the decoding function has the form
\[
 g_{i,j} : \Omega^{r_i(nT)} \rightarrow \Omega_j.
\]
For $i \in \beta(j)$, let the estimated value of the message from source $j$ by node $i$ be denoted by $\hat{W}_{ij}$. We say that there is a decoding error if  $W_j \neq \hat{W}_{ij}$ for some  $i \in \beta(j)$ and $j \in \{1,2,\ldots, N\}$.

A rate vector $(R_1,R_2, \ldots, R_N)$ with $R_j\geq 0$ for all $j$ is said to be {\em achievable} if there are $M$ protocol signals
$\{s_i(t)\}$, and encoding and decoding functions, such that for all $j$, source $j$ can be multicast to all nodes in $\beta(j)$ at a rate of $R_j$ symbols per packet duration with no decoding error, regardless of the delay offsets
$\delta_1, \ldots, \delta_M$. The closure of the set of all
achievable rate vectors, denoted by $\mathcal{C}_a(N,M,\alpha,\beta)$, is called the {\em zero-error capacity
region}, or simply the {\em capacity region}. The subscript ``$_a$'' signifies that the capacity region is for slot-asynchronous system. For slot-synchronous system,
we define achievable rate vector similarly, except that the delay offsets are restricted to integral multiples of $T$. The corresponding capacity region is denoted by $\mathcal{C}_s(N,M, \alpha, \beta)$.

We say that a tandem collision network is {\em bi-directional} if for all $j$ such that $1\neq \alpha(j)\neq M$, we can find $i$ and $i'$ in $\beta(j)$ such that $i<\alpha(j)< i'$. In words, it means that unless a source is associated with the left-most or right-most node, its message has to be sent to a node to the left and a node to the right. The two tandem collision networks in Examples 1 and 2 are both bi-directional. One of the main result in this paper is that, for bi-directional tandem collision network, the capacity region for both slot-synchronous and slot-asynchronous case can be achieved by the transmission scheme given in the next section.

\begin{remark}
Once the delay offsets $\delta_i$ are fixed, the transmission times of all packets in the future are also fixed.  Each link between two nodes becomes a deterministic erasure channel; the number of successfully received packets at each node per period is determined. So, for each fixed combination of delay offsets, the tandem collision network reduces to an {\em Aref network}~\cite{Aref80, Ratnakar06}. As we are interested in the worst-case throughput over all possible delay offsets, the analysis in this paper can be interpreted as taking the minimum throughput of a collection of Aref networks.
\end{remark}

\begin{remark} The assumption that
the protocol sequences are periodic is not restrictive, since in
practice, all pseduo-random number generators are periodic, and in
theory, the period can be arbitrarily large.
\end{remark}

\begin{remark}
We have the assumption that the transmission times of packets are independent of the messages to be transmitted or forwarded, and independent of the transmission of the other nodes.  The capacity region in this paper is obtained under the condition that channel accessing is done by protocol signal, so that collision avoidance algorithm and re-scheduling of packet transmission times etc. are not allowed.
\end{remark}

\section{A Transmission Scheme} \label{sec:achievable}

We first give a transmission scheme for slot-synchronous system, and then describe an extension to the slot-asynchronous case at the end of this section. The transmission scheme to be devised is time-slotted, and is based on a special class of protocol sequence, called shift-invariant protocol sequences, and a joint channel-network coding scheme. We remark that this transmission scheme can be applied to tandem collision network which may or may not be bi-directional. However, for bi-directional network, we will show in the next section that the achievable rates are indeed optimal.

As mentioned in Section~\ref{sec:system}, a protocol signal in time-slotted system can be specified  by a zero-one sequence, $s[k]$, so that a packet is transmitted in the $k$-th time slot $[kT, (k+1)T)$ if and only if $s[k] = 1$.
Let the smallest common period of the $M$ protocol sequences be denoted by integer~$P$. The duty factor of a zero-one sequence $s[k]$ of period $P$ is defined as
\[
 f_i \triangleq \frac{1}{P} \sum_{k=1}^P s[k].
\]
It can be easily seen that this is compatible with the notion of duty factor for continuous-time protocol signal.

\subsection{Slot-Synchronous System}

For slot-synchronous system, the delay offsets are all integral multiples of packet duration~$T$. With a slight abuse of language we model the delay offsets by integers, instead of real numbers. We will denote the delay offsets of node $i$ by an integer $\tau_i$, for $i=1,2,\ldots, M$. The actual delay is $\tau_i T$.

\paragraph{Shift-invariant Protocol Sequences} Shift-invariant protocol sequences are first used by Massey and Mathys~\cite{MasseyMathys85} to achieve the capacity region of the multiple-access collision channel without feedback. The following definition is from~\cite{SCSW09}. Given a set of $M$  zero-one sequences, $s_i[k]$, for $i=1,2,\ldots, M$, with common period $P$, and a subset
$\mathcal{A} = \{i_1, i_2,\ldots, i_m\} \subseteq \{1,2,\ldots, M\}$, the {\em generalized Hamming cross-correlation} is defined by
\[
H(\tau_1,\ldots, \tau_m; \mathcal{A}) \triangleq
\sum_{k=1}^P \prod_{\mu=1}^m s_{i_\mu}[k-\tau_\mu].
\]
We note that for a subset $\mathcal{A}$ of $\{1,2,\ldots, M\}$ consisting of two elements, the notion of generalized Hamming cross-correlation reduces to the usual Hamming cross-correlation for a pair of sequences. For $\mathcal{A}$ that is a singleton, it reduces to the Hamming weight. A set of $M$ protocol sequences
is said to be {\em shift-invariant} if for each subset $\mathcal{A} \subseteq \{1,2,\ldots, M\}$, the generalized Hamming cross-correlation is independent of delay offsets.

For the application in this paper, we do not need the full force of shift invariance. In a network with linear topology, any transmission from a node two or more hops away does not cause any interference. This justifies the restriction of our attention to generalized Hamming cross-correlation for subset $\mathcal{A}$ consisting of three or less elements. We say that a set of $M$ protocol sequences is {\em consecutively 3-wise shift-invariant} if
for each subset $\mathcal{A}$ of size three or less, consisting consecutive integers from $\{1,2,\ldots, M\}$, the generalized Hamming cross-correlation $H(\tau_1,\ldots, \tau_m; \mathcal{A})$ is independent of the delay offsets.

We define the {\em throughput} from node $i$ to node $i+1$, denoted by $\theta_{i,i+1}(\tau_i, \tau_{i+1}, \tau_{i+2})$, as the number of packets from node $i$ to node $i+1$ without collision in a period divided by~$P$,
 \begin{equation}
 \frac{1}{P} \sum_{k=1}^P s_i[k-\tau_i] (1-s_{i+1}[k-\tau_{i+1}]) (1-s_{i+2}[k-\tau_{i+2}]). \label{eq:throughput2}
\end{equation}
Here, $\tau_i$, $\tau_{i+1}$ and $\tau_{i+2}$ are integers representing the delay offsets of node $i$, $i+1$ and $i+2$, respectively.
We note that the throughput from node $i$ to node $i+1$ is affected only by the protocol signal of nodes $i$, $i+1$ and $i+2$, and hence is a function of $\tau_i$, $\tau_{i+1}$ and $\tau_{i+2}$.

The throughput from node $i$ to node $i-1$ is similarly defined as
\begin{equation}
 \frac{1}{P} \sum_{k=1}^P s_i[k-\tau_i] (1-s_{i-1}[k-\tau_{i-1}]) (1-s_{i-2}[k-\tau_{i-2}]) \label{eq:throughput3}
\end{equation}
and denoted by  $\theta_{i,i-1}(\tau_i, \tau_{i-1}, \tau_{i-2})$.

\begin{lemma}
If consecutively 3-wise shift-invariant protocol sequences are used in a time-slotted tandem collision network, then for all delay offsets $\tau_{i-2}$, $\tau_{i-1}$, $\tau_i$, $\tau_{i+1}$ and  $\tau_{i+2}$, we have
\[
\theta_{i,i+1}(\tau_i, \tau_{i+1}, \tau_{i+2}) =  f_i (1- f_{i+1}) (1 - f_{i+2})
\]
 and
\[
\theta_{i,i-1}(\tau_i, \tau_{i-1}, \tau_{i-2}) =  f_i (1- f_{i-1}) (1 - f_{i-2}),
\]
where $f_i$ is the duty factor of the $i$-th protocol sequence.
\label{lemma:SI}
\end{lemma}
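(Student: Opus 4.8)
The plan is to expand the throughput by inclusion--exclusion into a signed sum of generalized Hamming cross-correlations, and then evaluate each correlation using the hypothesis. The engine of the whole argument is a single observation: any cross-correlation that is \emph{known} to be shift-invariant must equal its own average over all delay offsets, and that average factorizes into a product of duty factors.

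First I would record this averaging identity. Fix a subset $\mathcal{A}=\{i_1,\ldots,i_m\}$ whose generalized Hamming cross-correlation is independent of the offsets. Since $H(\tau_1,\ldots,\tau_m;\mathcal{A})$ is constant in $(\tau_1,\ldots,\tau_m)$, it equals its average as each $\tau_\mu$ ranges over $\{0,\ldots,P-1\}$; interchanging the sum over $k$ with the sums over the offsets makes the product factorize, and each factor collapses to a duty factor because $\frac1P\sum_{\tau_\mu=0}^{P-1}s_{i_\mu}[k-\tau_\mu]=f_{i_\mu}$ is a full-period cyclic sum, independent of $k$:
\[
\frac1P H(\tau_1,\ldots,\tau_m;\mathcal{A}) = \frac{1}{P^{m+1}}\sum_{\tau_1,\ldots,\tau_m}\sum_{k=1}^P \prod_{\mu=1}^m s_{i_\mu}[k-\tau_\mu] = \frac1P\sum_{k=1}^P\prod_{\mu=1}^m f_{i_\mu} = \prod_{\mu=1}^m f_{i_\mu}.
\]

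Next I would multiply out $(1-s_{i+1})(1-s_{i+2})$ in \eqref{eq:throughput2} to obtain, with the appropriate offsets understood,
\[
\theta_{i,i+1} = f_i - \tfrac1P H(\{i,i+1\}) - \tfrac1P H(\{i,i+2\}) + \tfrac1P H(\{i,i+1,i+2\}),
\]
where the lone term is $\frac1P\sum_k s_i[k-\tau_i]=f_i$, again a shift-insensitive full-period sum. Each of $\{i,i+1\}$, $\{i,i+2\}$ and $\{i,i+1,i+2\}$ is a subset of size at most three of the block of consecutive nodes $\{i,i+1,i+2\}$, so consecutive $3$-wise shift-invariance makes all three correlations offset-independent; the averaging identity then evaluates them to $f_if_{i+1}$, $f_if_{i+2}$ and $f_if_{i+1}f_{i+2}$. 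Substituting and factoring gives $f_i-f_if_{i+1}-f_if_{i+2}+f_if_{i+1}f_{i+2}=f_i(1-f_{i+1})(1-f_{i+2})$, as claimed; the formula for $\theta_{i,i-1}$ is the mirror image and follows from the identical computation with $i+1,i+2$ replaced by $i-1,i-2$.

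The step I expect to need the most care is the distance-two pair $\{i,i+2\}$. It surfaces in the inclusion--exclusion even though nodes $i$ and $i+2$ never interfere directly, so the argument genuinely needs its correlation to be shift-invariant; this is why the hypothesis must be read as covering \emph{every} size-$\le 3$ subset of a consecutive triple, and I would explicitly note that $\{i,i+2\}\subseteq\{i,i+1,i+2\}$ so that it is included. Apart from this point, the averaging identity is the only nontrivial ingredient and the rest is bookkeeping.
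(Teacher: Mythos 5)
Your proposal is correct and takes essentially the same route as the paper's proof: both expand $s_i[k-\tau_i](1-s_{i+1}[k-\tau_{i+1}])(1-s_{i+2}[k-\tau_{i+2}])$ into four generalized Hamming cross-correlations, invoke shift-invariance to conclude each is constant in the offsets, and then evaluate by averaging over the delay offsets --- the paper averages the whole throughput over $(\tau_i,\tau_{i+1},\tau_{i+2})$ at once to identify the constant $\Theta$, whereas you average each correlation term separately, which is only a cosmetic reorganization. Your explicit caveat about the non-consecutive pair $\{i,i+2\}$ is well taken: the paper's proof uses its shift-invariance silently, so you are merely making explicit the reading of ``consecutively 3-wise shift-invariant'' (all subsets of size at most three of a consecutive triple) on which the paper's own argument already depends.
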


The above lemma says that the throughput function does not depend on the relative delay offsets, provided that the protocol sequences are consecutively 3-wise shift-invariant.
The proof is based on an elementary property of zero-one sequences~\cite{SP80}, which is included here for the sake of completeness.

\begin{proof}
We prove only the statement for $\theta_{i,i+1}(\tau_i, \tau_{i+1}, \tau_{i+2})$. The proof for the second one is similar and omitted.

The summand in~\eqref{eq:throughput2} can be expanded as a linear combination of four terms, namely $s_i[k-\tau_i]$, $s_i[k-\tau_i]s_{i+1}[k-\tau_{i+1}]$,
$s_i[k-\tau_i]s_{i+2}[k-\tau_{i+2}]$, and
 \[
  s_i[k-\tau_i]s_{i+1}[k-\tau_{i+1}] s_{i+2}[k-\tau_{i+2}].
 \]
After taking the summation over $k$, we can see that $\theta_{i,i+1}(\tau_i,\tau_{i+1},\tau_{i+2})$ is equal to the sum of four generalized Hamming cross-correlations, each of which is independent of the delay offsets $\tau_i$'s by the shift-invariant assumption. Hence the linear combination is also independent of the delay offsets. This proves that $\theta_{i,i+1}(\tau_i,\tau_{i+1},\tau_{i+2})$ is independent of delay offsets. Let this value be denoted by~$\Theta$.

Next, we sum~\eqref{eq:throughput2} over $\tau_i$, $\tau_{i+1}$ and $\tau_{i+2}$. After exchanging  the order of summations, we obtain
\begin{align*}
P^3 \Theta
& = \frac{1}{P} \sum_{k=1}^P \sum_{\tau_i=1}^P s_i[k-\tau_i] \sum_{\tau_{i+1}=1}^P (1-s_{i+1}[k-\tau_{i+1}]) \\
& \qquad \cdot \sum_{\tau_{i+2}=1}^P (1-s_{i+2}[k-\tau_{i+2}]) \\
&=\frac{1}{P} \sum_{k=1}^P (P f_i) (P (1-f_{i+1})) (P (1-f_{i+2})) \\
& = P^3 f_i (1-f_{i+1}) (1-f_{i+2}).
\end{align*}
This proves the first part of the lemma.
\end{proof}

We write the duty factor of the protocol sequence $s_i[k]$ as $n_i/d$, where $d$ is a common denominator. We next  construct consecutively 3-wise shift-invariant sequences with period $d^3$. Given positive integer $n$ and $d$, let $\vec{u}(n,d)$ denote a $d$-dimensional row vector whose $n$ components on the left are 1, and the $(d-n)$ entries on the right are~0,
\[
 \vec{u}(n,d) \triangleq [ \underbrace{1\ 1\ \ldots 1}_{n}\ \underbrace{0\ 0\ \ldots 0}_{d-n} ].
\]

{\bf Construction:} Given $M$ fractions $n_i/d$ for $i=1,2,\ldots, M$, we construct $M$ protocol sequences of period $d^3$ as follows. For $i \equiv 1 \bmod 3$, let $s_i[k]$ be the concatenation of $d^2$ copies of $\vec{u}(n_i,d)$,
\[
 [\underbrace{\vec{u}(n_i,d)\ \vec{u}(n_i,d)\ \ldots \vec{u}(n_i,d)}_{d^2}].
\]
For $i \equiv 2 \bmod 3$, let $s_i[k]$ be the concatenation of $d$ copies of $\vec{u}(d n_i ,d^2)$,
\[
 [\underbrace{ \vec{u}(d n_i,d^2)\ \vec{u}(d n_i,d^2)\ \ldots \vec{u}(d n_i,d^2)}_{d}].
\]
For $i \equiv 0 \bmod 3$, let $s_i[k]$ be $\vec{u}(d^2 n_i ,d^3)$.

\begin{example}
Suppose that $M=5$ and the the duty factors are $f_i = 1/3$ for $i=1,2,3$, and $f_i = 2/3$ for $i= 4,5$. The protocol sequences constructed by the above method are
\begin{align*}
s_1[k]:\ & 100\ 100\ 100\ 100\ 100\ 100\ 100\ 100\ 100 \\
s_2[k]:\ & 111\ 000\ 000\ 111\ 000\ 000\ 111\ 000\ 000 \\
s_3[k]:\ & 111\ 111\ 111\ 000\ 000\ 000\ 000\ 000\ 000 \\
s_4[k]:\ & 110\ 110\ 110\ 110\ 110\ 110\ 110\ 110\ 110 \\
s_5[k]:\ & 111\ 111\ 000\ 111\ 111\ 000\ 111\ 111\ 000
\end{align*}
\end{example}

\begin{prop}
The protocol sequences by the above construction method are consecutively 3-wise shift-invariant. \label{prop:SI}
\end{prop}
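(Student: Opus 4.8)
The plan is to re-describe each constructed sequence in terms of the base-$d$ digits of the time index, which turns the delicate-looking shift-invariance claim into a short counting argument. Writing $k = k_0 + k_1 d + k_2 d^2$ with $k_0,k_1,k_2 \in \{0,1,\ldots,d-1\}$ for $k \in \{0,\ldots,d^3-1\}$, I would first verify the following clean characterization: for $i \equiv 1 \bmod 3$, $s_i[k]=1$ iff $k_0 < n_i$; for $i \equiv 2 \bmod 3$, $s_i[k]=1$ iff $k_1 < n_i$; and for $i \equiv 0 \bmod 3$, $s_i[k]=1$ iff $k_2 < n_i$. The case $i\equiv 1$ is immediate, and the other two follow from the elementary observation that for $0 \le a < d^j$ and $0 \le b \le d-1$ one has $a + b d^j < n d^j$ if and only if $b < n$ (take $j=1$ with $a=k_0$, and $j=2$ with $a=k_0+k_1 d$). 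Thus each sequence is simply the indicator that one prescribed base-$d$ digit is smaller than its parameter~$n_i$.

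The heart of the argument is a single borrow lemma: if $s$ is the ``digit-$j$'' indicator with parameter $n$, then for any offset $\tau$ the digit-$j$ of $(k-\tau) \bmod d^3$ equals $(k_j - c) \bmod d$, where the constant $c$ depends on $\tau$ and on the lower digits $k_0,\ldots,k_{j-1}$ but \emph{not} on $k_j$ nor on the higher digits. This is nothing more than the fact that in base-$d$ subtraction the borrow propagating into position $j$ is determined by the digits strictly below position~$j$. Consequently, holding all digits other than $k_j$ fixed and summing $s[k-\tau]$ over $k_j \in \{0,\ldots,d-1\}$ yields exactly $n$, since $(k_j - c)\bmod d$ runs through a complete residue system regardless of the value of~$c$.

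I would then assemble the generalized Hamming cross-correlations by exploiting that any set of consecutive indices realizes distinct residues mod~$3$. A singleton is immediate. For a consecutive pair the two indices hit two different digit-types, and for a consecutive triple $\{i,i+1,i+2\}$ the three residues cover $\{0,1,2\}$, so the factors are one digit-$0$, one digit-$1$ and one digit-$2$ indicator. The key structural point, now visible, is that the dependency is \emph{triangular}: after shifting, the digit-$0$ factor still depends only on $k_0$, the digit-$1$ factor only on $k_0,k_1$, and the digit-$2$ factor on all of $k_0,k_1,k_2$. Summing the innermost variable $k_2$ first — only the digit-$2$ factor sees it — contributes its parameter by the borrow lemma, independently of $k_0,k_1$; then summing $k_1$ contributes the digit-$1$ parameter independently of $k_0$; and finally $k_0$ contributes the digit-$0$ parameter. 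The product of the three parameters is manifestly independent of all three offsets, which is the asserted shift-invariance.

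The main obstacle, and the only place requiring care, is the borrow lemma: one must argue cleanly that the borrow into digit $j$ is a function of the lower-order digits alone, so that $k_j \mapsto (k_j-c)\bmod d$ is a bijection of $\{0,\ldots,d-1\}$ for every fixed setting of the remaining digits and every~$\tau$. Everything else — the digit characterization of the construction and the nested, triangular summation — is routine once this is in place. As a consistency check, the resulting triple value is the product of the three parameters, which matches the duty-factor product $P f_i(1-f_{i+1})(1-f_{i+2})$ predicted by Lemma~\ref{lemma:SI}.
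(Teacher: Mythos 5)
Your proof is correct, and it supplies something the paper does not actually contain: the paper omits the argument for Proposition~\ref{prop:SI} entirely, deferring to Theorem~8 of~\cite{SCSW09}, where the analogous period-$d^M$ construction is shown to be fully $M$-wise shift-invariant. Your three ingredients are all sound: the digit characterization (each $s_i[k]$ is the indicator that one prescribed base-$d$ digit of $k$ is smaller than $n_i$, with the digit position determined by $i \bmod 3$), the borrow lemma (under subtraction of $\tau$ modulo $d^3$, the borrow entering position $j$ depends only on $\tau$ and the digits strictly below $j$, so $k_j \mapsto (k_j - c) \bmod d$ is a bijection of $\{0,\ldots,d-1\}$ for every fixed setting of the lower digits), and the triangular summation over $k_2$, then $k_1$, then $k_0$, which makes each factor contribute exactly its parameter independently of all offsets. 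This is essentially the counting argument underlying the cited theorem, specialized to the period-$d^3$ setting with digit types recycled modulo~3, so you have in effect reconstructed the omitted proof rather than found a new one. Two small corrections are worth making. First, your closing consistency check is misstated: the triple correlation you compute equals the product of the three parameters, i.e.\ $P f_i f_{i+1} f_{i+2}$, not $P f_i(1-f_{i+1})(1-f_{i+2})$; the latter is the throughput of Lemma~\ref{lemma:SI}, which emerges only after expanding~\eqref{eq:throughput2} into a combination of four correlations. Second, that very expansion in Lemma~\ref{lemma:SI} also involves the \emph{non-consecutive} pair $\{i,i+2\}$ (the term $s_i[k-\tau_i]s_{i+2}[k-\tau_{i+2}]$), which the paper's literal definition of ``consecutively 3-wise shift-invariant'' does not cover; your argument handles it verbatim, since all it uses is that the indices have distinct residues modulo~3, so you should state your conclusion for all subsets of a window $\{i,i+1,i+2\}$ rather than only for sets of consecutive integers.
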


The proof of Prop.~\ref{prop:SI} is similar to Theorem~8 in~\cite{SCSW09}, and is omitted.

\medskip

\paragraph{Identifying the senders of a packet}
We will use an interesting property of consecutively 3-wise shift-invariant sequences, called {\em identifiability}. Consider node~1. All successfully received packets at node~1 come from node~2. Also, it is also clear that all successfully received packets at node~$M$ come from node~$M-1$. Identifying the sender of packets is trivial for the two nodes at the ends.

For the nodes in between, it is not immediate to tell whether the packets are from the left or from the right. One method of providing the identity of senders is to attach a header at the beginning of each packet. However, if consecutively  3-wise shift-invariant protocol sequences are used, the receiver can identify the sender of all non-collided packets by observing the channel activities, without looking into the content of the packets. Hence theoretically, we can dispense with packet header.

After a period of $P$ time slots, a node records the status of channel in each time slot. Consider a given node, say node $i$. We indicate the time slot when node $i$ is transmitting by $\Delta$. For the time slot when node $i$ is receiving, we use symbols ``0'', ``1'' and ``$*$'' to indicate whether there is 0, 1, or more than 1 packets is received. The symbol ``$*$'' represents a collision.

We call this sequence of ``$\Delta$'',``0'', ``1'' and ``$*$'' the {\em channel activity signal}. If we can always determine the senders of all non-collided packets from the channel activity signal, regardless of what the delay offsets are, then we say that the protocol sequence set is {\em identifiable}. As an example, suppose that all delay offsets are zero, and the protocol sequences in Example~3 are used. The channel activity signal observed at node 2 is
\begin{equation}
\Delta\!\Delta\!\Delta *\!11 *\!11\ \Delta\!\Delta\!\Delta\ 100\ 100\ \Delta\!\Delta\!\Delta\ 100\ 100.
\label{eq:identifiability}
\end{equation}
If the protocol sequences are identifiable, the receiver at node~2 is able to deduce from the above channel activity signal that the packets at time slots 5, 6, 8 and 9 are from node~3, and the packets at time slots 13, 16, 22 and 25 are from node~1. We remark that identifiability does not mean that the delay offsets are determined. In fact, for the channel activity signal in~\eqref{eq:identifiability}, we can cyclically shift $s_1[k]$ by any multiple of~3, without any change in the channel activity signal. Therefore, determining the delay offsets uniquely using only the channel activity signal is not possible in general.

We now show that the property of identifiability is implied by consecutively 3-wise shift invariance. Let $c_i[k]$ denote the channel activity signal observed by node~$i$. Suppose that the delay offsets of nodes $i-1$ and $i+1$ are $\tau_{i-1}$ and $\tau_{i+1}$, which is unknown to node~$i$. Given $\tau_{i-1}'$ and $\tau_{i+1}'$ from 0 to $P-1$, node $i$ compute
\[
 c_i'[k] = \begin{cases}
 \Delta & \text{if } s_i[k+\tau_i] = 1 \\
 0 & \text{if } s_i[k+\tau_i] = 0 \text{ and}\\
 &  \qquad s_{i-1}[k+\tau_{i-1}'] =s_{i+1}[k+\tau_{i+1}'] =0, \\
 * & \text{if } s_i[k+\tau_i] = 0 \text{ and} \\
 & \qquad s_{i-1}[k+\tau_{i-1}'] =s_{i+1}[k+\tau_{i+1}'] =1, \\
 1 & \text{if } s_i[k+\tau_i] = 0 \text{ and either} \\
  & \ \ \text{$s_{i-1}[k+\tau_{i-1}']=1$  or $s_{i+1}[k+\tau_{i+1}']=1$}. \\
 \end{cases}
\]
Node $i$ would observe $c_i'[k]$ if the delay offsets of node $i-1$ and $i+1$ were $\tau_{i-1}'$ and $\tau_{i+1}'$ respectively. We search for $\tau_{i-1}'$ and $\tau_{i+1}'$ such that the corresponding $c_i'[k]$ is the same with the true channel activity signal $c_i[k]$. There is always at least one solution to this search problem, because $\tau_{i-1}'= \tau_{i-1}$ and $\tau_{i+1}' = \tau_{i+1}$ is one such solution. In general, there may be multiple solutions. After one such pair of $(\tau_{i-1}', \tau_{i+1}')$ is found, we
then declare that the packet with time index $k$ satisfying $c_i'[k] = s_{i\pm 1 }[k+\tau_{i\pm 1}']=1$ is sent from node $i\pm 1$.

The correctness of this algorithm, provided that the protocol sequences $s_{i-1}[k]$, $s_{i}[k]$ and $s_{i+1}[k]$ are consecutively 3-wise shift-invariant, is given in the appendix.

\medskip

\paragraph{Determining the delay offsets}
We have seen that the identifiability property is insufficient to determine the delay offsets. Nevertheless, it is pointed out in~\cite{MasseyMathys85} that we can find the delay offsets by the following initialization mechanism at the beginning of the communication session.

Suppose a node began to transmit information starting from some finite time in the past, and before that, the zero packet was transmitted in the infinite past when a node is required to transmit a packet by the protocol sequence. Following the notation in~\cite{MasseyMathys85}, we call the $Pf_i$ packets from node~$i$ within a period of $P$ time slots as a {\em frame}.
Before any information packets are transmitted, node~$i$ first sends the following $Pf_i+1$ frames $[1,1,\ldots, 1]$, $[1,0,\ldots, 0]$, $[0,1,0\ldots, 0] , \ldots, [0,\ldots, 0, 1]$. In the first frame, all packets contain the value $1$ in the alphabet set $\Omega$. In each of the remaining $P f_i$ frames, there is exactly one packet which equals~1.

We now describe how node $i+1$ determine the delay offset of node~$i$. Before node $i$ starts transmitting, node $i+1$ can only observe idle time slots, or packets from node $i+2$. When node $i+1$ first receives a packet containing the symbol~1 from user~$i$, node $i+1$ buffers $P f_i+1$ periods of packets. (Node $i+1$ can distinguish the packets from node $i$ and $i+2$ by the identifiability property of shift-invariant sequences.) Node $i+1$ then tries to find a pair of packets from node~$i$, containing the value ``1'', such that their time difference is an integral multiple of~$P$. Suppose that there are two such packets separated by $a P$ time indices for some integer $a$. The first packet should belong to the very first frame $[1,\ldots, 1]$, and the second to the frame $[0,\ldots, 0, 1, 0, \ldots, 0]$ with a ``1'' in the $a$-th position. Since node $i+1$ knows the protocol sequence of node~$i$, $s_i[k]$, the time index of the $a$-th ``1'' in $s_i[k]$ is also known.  The time indices of this pair of packets, reduced modulo $P$, is the relative delay offset of node $i$.

\medskip

\paragraph{Joint Channel-Network Coding}
The encoder in each node has two objectives: combat the erasures caused by collisions (channel coding) and facilitate information flow in both directions (network coding). This is achieved by a joint channel-network coding scheme called {\em nested coding}, which is a coding technique found useful in cooperative relaying~\cite{Yang05, Xiao06, Wu07}. The main difference between the nested coding scheme in this paper and those in~\cite{Yang05, Xiao06, Wu07 } is that our target is to correct erasures, while the nested codes in~\cite{Yang05, Xiao06, Wu07} are aimed at correcting errors.

We continue the notation that the duty factor at node $i$, $f_i$, is equal to a fraction $n_i/d$, for $i=1,2,\ldots, M$.
With the use of the consecutively 3-wise shift-invariant protocol sequences constructed earlier, the common period of the protocol sequences is $P=d^3$, and
the number of packets sent out by node $i$ in a period is $n_id^2$. In this section, we assume that $Q$ is a prime power and $Q \geq n_i d^2$ for $i=1,2,\ldots, M$, and assume that $\Omega$ is the finite field of $Q$ elements. Let the element of $\Omega$ be ordered in some arbitrary way,
\[
 \Omega = \{ \omega_1, \omega_2, \ldots, \omega_Q\}.
\]

We use Reed-Solomon (RS) code~\cite{RS60} as a building block of the nested coding. Let $\mathcal{F}_k$ be the set of all polynomials over $\Omega$ with degree less than or equal to $k-1$,
\[
 \mathcal{F}_k \triangleq \Big\{ \sum_{\ell = 0}^{k-1} c_\ell x^\ell:\, c_\ell \in \Omega \Big\}.
\]
The coefficients of a polynomial in $\mathcal{F}_k$ are treated as information symbols to be encoded.
Given any subset $\mathcal{X}$ of $\Omega$ with $|\mathcal{X}|$ elements, we let
$(f(\omega))_{\omega \in \mathcal{X}}$ be the $|\mathcal{X}|$-tuple with the component indexed by $\omega$ equal to the value of $f$ evaluated at~$\omega$.  The vector $(f(\omega))_{\omega \in \mathcal{X}}$ is called a {\em codeword}.
We define the $k$-dimensional RS code on $\mathcal{X}$ over $\Omega$ as the set of all codewords
\[
 \{ (f(\omega))_{\omega \in \mathcal{X}} \in \Omega^{|\mathcal{X}|} :\, f \in \mathcal{F}_k \}.
\]
The encoding function maps a polynomial of degree strictly less than $k$, with coefficient in $\Omega$, to $(f(\omega))_{\omega \in \mathcal{X}}$. The encoding map is denoted by $\ev_\mathcal{X}(f)$.  For any $f \in \mathcal{F}_k$, if we know the value of $f$ evaluated at any $k$ elements in $\mathcal{X}$, then we can recover $f$ by interpolation, and thus the coefficients of $f$ can be uniquely determined. In fact, given $k$ points
\[
 (\omega_{t_1}, f(\omega_{t_1})),  (\omega_{t_2}, f(\omega_{t_2})), \ldots,  (\omega_{t_k}, f(\omega_{t_k})),
\]
where $\omega_{t_i}$ are distinct for $i=1,2,\ldots, k$, we can obtain the coefficients of $f(x) = \sum_{\ell=0}^{k-1} c_\ell x^{\ell}$ by solving
\[
\begin{bmatrix}
1 & \omega_{t_1} & \omega_{t_1}^2 & \cdots &\omega_{t_1}^{k-1} \\
1 & \omega_{t_2} & \omega_{t_2}^2 & \cdots &\omega_{t_2}^{k-1} \\
\vdots & \vdots & \vdots & \ddots & \vdots \\
1 & \omega_{t_k} & \omega_{t_k}^2 & \cdots &\omega_{t_k}^{k-1}
\end{bmatrix}
\begin{bmatrix}
c_0 \\ c_1 \\ \vdots \\ c_{k-1}
\end{bmatrix}
=\begin{bmatrix}
f(\omega_{t_1}) \\ f(\omega_{t_2}) \\ \vdots \\ f(\omega_{t_k})
\end{bmatrix}.
\]
The matrix on the left hand side is a Vandermonde matrix, and is invertible because $\omega_{t_1}$, $\omega_{t_2}, \ldots, \omega_{t_k}$ are distinct.

To facilitate the discussion on the encoding and decoding procedure, we make the following definitions.
For $i=2,3,\ldots, M-1$, let $\mathcal{S}_{i}^f$ be the set of sources associated with a node to the left of node $i$ and demanded by a node to the right of node~$i$,
\begin{equation}
 \mathcal{S}_{i}^f \triangleq \{ j:\, \alpha(j) < i, \text{ and }  \beta(j) \ni i' > i\}.  \label{eq:source_left_right}
\end{equation}
Also, for $i=2,3,\ldots, M-1$, let $\mathcal{S}_{i}^b$ be the set of sources associated with a node to the right of node $i$ and demanded by a node to the left of node~$i$,
\begin{equation}
 \mathcal{S}_{i}^b \triangleq \{ j:\, \alpha(j) > i, \text{ and }  \beta(j) \ni i' < i\}. \label{eq:source_right_left}
\end{equation}
We set $\mathcal{S}_{1}^f = \mathcal{S}_{M}^f = \mathcal{S}_{1}^b = \mathcal{S}_{M}^b = \emptyset$.

The superscripts ``$^f$'' and ``$^b$'' stand for ``forward'' and ``backward'' respectively. We note that $\mathcal{S}_{i}^f$ and $\mathcal{S}_{i}^b$ may be empty. For $i=2,3,\ldots, M-1$, let
\[r_{i}^f \triangleq \sum_{j \in \mathcal{S}_{i}^f} R_j\]
be the data rate through node $i$ from left to right, and
\[r_{i}^b \triangleq \sum_{j \in \mathcal{S}_{i}^b} R_j\]
be the data rate through node~$i$ from right to left.

To illustrate the notation, we consider the four-node network in Example~1.  The data from source~1 passes through nodes 2 and 3 in the forward direction, and the data from source~2 passes through nodes 2 and 3 in the backward direction. We have $\mathcal{S}_1^f = \mathcal{S}_1^b = \mathcal{S}_4^f =\mathcal{S}_4^b = \emptyset$, $\mathcal{S}_2^f = \mathcal{S}_3^f = \{1\}$, $\mathcal{S}_2^b = \mathcal{S}_2^b = \{2\}$, $r_2^f = r_3^f = R_1$ and $r_2^b = r_3^b = R_2$.

The encoding scheme is a decode-and-forward scheme. To simplify the description, we assume that all delay offsets $\tau_i$ are equal to zero, i.e., all protocol sequences are aligned, and consider the operations at node~$i$. In a period of $P$ packet durations, $r_i^f P$ source symbols which are going to be sent through node $i$ in the forward direction are generated. Call these symbols
\[
 a(1), a(2), \ldots, a(r_{i}^f P).
\]
Likewise, let
\[
 b(1), b(2), \ldots, b(r_{i}^b P),
\]
be source symbols to be sent to node~$i-1$ through node $i$ in the backward direction. Suppose that $\alpha(\sigma)=i$, i.e., source $\sigma$ is associated with node~$i$. In a time period of $PT$ seconds, source $\sigma$ produces $R_\sigma P$ symbols
\[
 c(1), c(2), \ldots, c(R_\sigma P).
\]
The encoder at node $i$ maps these $(r_{i}^f + r_{i}^b + R_\sigma)P$ symbols to $n_i d^2$ symbols and send them out according to a protocol sequence with duty factor $n_i/d$.
Define the following three polynomials,
\begin{align*}
 g(x)   & \triangleq  \sum_{k=1}^{R_\sigma P} c(k)x^{k-1} \\
 h^f(x) & \triangleq  \sum_{k=1}^{r_{i}^f P} a(k)x^{k-1} \\
 h^b(x) & \triangleq  \sum_{k=1}^{r_{i}^b P} b(k)x^{k-1} .
\end{align*}
The degrees of polynomials $g(x)$, $h^f(x)$ and $h^b(x)$ are no more than $R_\sigma P-1$, $r_{i}^f P-1$ and $r_{i}^b P-1$, respectively. Let $\mathcal{X}_i$ be the set of the first $n_id^2$ elements in $\Omega$. We transmit the following frame of packets
\[
  \ev_{\mathcal{X}_i}(g(x)+ (h^f(x) +h^b(x))x^{R_\sigma P} )
\]
in a period. Here, the addition and multiplication are polynomial arithmetics over the finite field~$\Omega$.

\begin{figure}
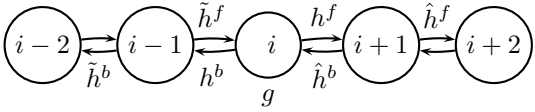

    \begin{center}
        \pspicture(-.3,-.5)(6,1)
     \cnodeput(0,0){N1}{$i-2$}
     \cnodeput(1.5,0){N2}{$i-1$}
     \cnodeput(3,0){N3}{$\phantom{+}i\phantom{1}$}
     \cnodeput(4.5,0){N4}{$i+1$}
     \cnodeput(6,0){N5}{$i+2$}

     \ncarc{->}{N1}{N2}
     \ncarc{->}{N2}{N1}
     \ncarc{->}{N2}{N3}
     \ncarc{->}{N3}{N2}
     \ncarc{->}{N3}{N4}
     \ncarc{->}{N4}{N3}
     \ncarc{->}{N4}{N5}
     \ncarc{->}{N5}{N4}

     \rput[u](3,-.7){$g$}
     \rput[u](2.25,0.4){$\tilde{h}^f$}
     \rput[u](3.75,0.4){$h^f$}
     \rput[u](5.25,0.4){$\hat{h}^f$}
     \rput[d](.75,-0.4){$\tilde{h}^b$}
     \rput[d](2.25,-0.4){$h^b$}
     \rput[d](3.75,-0.4){$\hat{h}^b$}
        \endpspicture
    \end{center}
\caption{Illustration for the Joint Channel-Network Coding}
\label{fig:encoding}
\end{figure}

Suppose source $\tilde{\sigma}$ is attached to node $i-1$ and produces symbols at a rate of $R_{\tilde{\sigma}}$ symbols per packet duration. At the end of a period, node~$i$ receives from node $i-1$ the following frame of packets
\begin{equation}
  \ev_{\mathcal{X}_{i-1}}( \tilde{g}(x)+ (\tilde{h}^f(x) + \tilde{h}^b(x))x^{R_{\tilde{\sigma}} P} )
  \label{eq:from_left}
\end{equation}
with some of the packets erased due to collision.

The degrees of $\tilde{g}(x)$, $\tilde{h}^f(x)$ and $\tilde{h}^b(x)$ are no more than $R_{\tilde{\sigma}}P-1$, $\tilde{r}_{i-1}^f P-1$ and $\tilde{r}_{i-1}^b P-1$, respectively. The coefficients of $\tilde{h}^f(x)$ are the symbols sent from node~$i-1$ to node~$i$, and the coefficients of $\tilde{h}^b(x)$ are the symbols sent from node~$i-1$ to node~$i-2$. The coefficients of $\tilde{g}(x)$ are the symbols from source $\tilde{\sigma}$. Since $\tilde{h}^b(x)$ is known to node~$i$, the decoder of node~$i$ can subtract
\[
 \ev_{\mathcal{X}_{i-1}}( \tilde{h}^b(x)x^{R_{\tilde{\sigma}} P} )
\]
from~\eqref{eq:from_left}, and obtain
\begin{equation}
 \ev_{\mathcal{X}_{i-1}}( \tilde{g}(x)+ \tilde{h}^f(x) x^{R_{\tilde{\sigma}} P} ),
 \label{eq:RS_codeword}
\end{equation}
with some components missing due to erasures. Here, the subtraction is done using arithmetics in finite field $\Omega$.
The vector in~\eqref{eq:RS_codeword} is an RS codeword, corresponding to a polynomial of degree no more than $(R_{\tilde{\sigma}}+\tilde{r}_{i-1}^f)P-1$. Provided that the number of non-collided packets is larger than or equal to $(R_{\tilde{\sigma}}+\tilde{r}_{i-1}^f)P$, then node $i$ can recover $\tilde{g}(x)$ and~$\tilde{h}^f(x)$.

Suppose there is no source associated with node $i+1$. Then the frame of packets transmitted by node $i+1$ is in the form of
\begin{equation}
  \ev_{\mathcal{X}_{i+1}}( \hat{h}^f(x) + \hat{h}^b(x) )
  \label{eq:from_right}
\end{equation}
The coefficients of $\hat{h}^f(x)$ are the symbols sent from node~$i+1$ to node~$i+2$, and the coefficients of $\hat{h}^b(x)$ are the symbols sent from node~$i+1$ to node~$i$. Since $\hat{h}^f(x)$ is known to node~$i$, the decoder of node~$i$ subtracts
$ \ev_{\mathcal{X}_{i+1}}( \hat{h}^f(x))$
from \eqref{eq:from_right}, and obtain an erased version of
 $\ev_{\mathcal{X}_{i+1}}( \hat{h}^b(x)  )$.
Note that $ \hat{h}^b (x)$ is a polynomial of degree no more than
$\hat{r}_{i+1}^b P - 1$.
Provided that node $i$ receives at least $ \hat{r}_{i+1}^b P$ non-collided packets from node~$i+1$, then the decoder of node $i$ can recover $\hat{h}^b(x)$. (See Fig.~\ref{fig:encoding}.)

After the coefficients of $\tilde{g}(x)$,  $\tilde{h}^f(x)$ and $\hat{h}^b(x)$ are recovered, the encoder of node~$i$ produces the frame of packets for the next period, and the process continues.

If the delay offsets are not zero, then the nodes need to buffer the decoded symbols for one period. With extra delay due to the buffering, the joint channel-network coding scheme works in a similar way as in the case with zero delay offset. We characterize the rate region achieved by this transmission scheme in the following theorem.

\begin{theorem}
For $i=1,2,\ldots, M$, let $\mathcal{S}_{i}^f$ and $\mathcal{S}_{i}^b$ be defined as in~\eqref{eq:source_left_right} and~\eqref{eq:source_right_left}.
A rate vector $(R_1,\ldots, R_N)$ is achievable in a slot-synchronous tandem collision network if for $i=1,2,\ldots, M$,
\begin{align}
R_\sigma + \sum_{j \in \mathcal{S}_{i}^f} R_j &\leq f_i(1-f_{i+1}) (1- f_{i+2}) \label{eq:condition_source1} \\
R_\sigma + \sum_{j \in \mathcal{S}_{i}^b} R_j &\leq f_i(1-f_{i-1}) (1- f_{i-2})
\label{eq:condition_source2}
\end{align}
when node $i$ is a source node associated with source $\sigma$, and
\begin{align}
 \sum_{j \in \mathcal{S}_{i}^f} R_j &\leq f_i(1-f_{i+1}) (1- f_{i+2}) \label{eq:condition_nosource1} \\
  \sum_{j \in \mathcal{S}_{i}^b} R_j &\leq f_i(1-f_{i-1}) (1- f_{i-2})
  \label{eq:condition_nosource2}
\end{align}
when node~$i$ is not a source node, for some non-negative real numbers $f_1, \ldots, f_M$  between 0 and~1.
\label{thm:synchronous}
\end{theorem}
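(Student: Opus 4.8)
The plan is to turn the machinery assembled in this section into a decode-and-forward argument that works uniformly over all delay offsets. First I would reduce to the case of rational duty factors: write $f_i = n_i/d$ with a common denominator $d$, and note that since $\mathcal{C}_s$ is defined as a closure and the rationals are dense, it suffices to establish achievability for rational $f_i$ with the non-strict inequalities, the general real case following by approximating the $f_i$ by rationals and passing to the limit. With $f_i = n_i/d$ fixed, apply the construction to obtain $M$ consecutively $3$-wise shift-invariant protocol sequences of common period $P = d^3$ (Proposition~\ref{prop:SI}). Adopting the convention $f_0 = f_{M+1} = 0$ for the nonexistent boundary nodes, Lemma~\ref{lemma:SI} then guarantees that, \emph{regardless of the delay offsets}, the number of collision-free packets delivered from node $i$ to node $i+1$ in one period is exactly $P\,\theta_{i,i+1} = P f_i(1-f_{i+1})(1-f_{i+2})$, and from node $i$ to node $i-1$ is exactly $P f_i(1-f_{i-1})(1-f_{i-2})$. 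The identifiability property and the initialization procedure described above let every node sort its successfully received packets by sender and align them to period boundaries, so each node knows which incoming slots carry node-$(i-1)$ data and which carry node-$(i+1)$ data.

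Next I would run the nested Reed--Solomon scheme exactly as set up before the theorem: node $i$ transmits $\ev_{\mathcal{X}_i}\!\big(g(x) + (h^f(x)+h^b(x))x^{R_\sigma P}\big)$, where $g$ carries its own source (if any), $h^f$ the forward flow, and $h^b$ the backward flow. The heart of the argument is the cancellation step enabled by network coding: the backward stream $\tilde h^b$ emitted by node $i-1$ consists precisely of symbols that originated at or to the right of node $i$ and were relayed leftward \emph{through} node $i$ in the previous period, so node $i$ already knows $\tilde h^b$ and may subtract $\ev_{\mathcal{X}_{i-1}}(\tilde h^b(x)x^{R_{\tilde\sigma}P})$ from the received frame \eqref{eq:from_left} to obtain the pure codeword \eqref{eq:RS_codeword}. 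Symmetrically, node $i$ knows the forward stream $\hat h^f$ emitted by node $i+1$ and cancels it from \eqref{eq:from_right}. After cancellation, what remains from each neighbor is a genuine RS codeword in the unknowns node $i$ must actually learn.

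It then remains to check that decoding always succeeds. The residual forward codeword \eqref{eq:RS_codeword} corresponds to a polynomial of degree at most $(R_{\tilde\sigma}+\tilde r_{i-1}^f)P - 1$, so $(R_{\tilde\sigma}+\tilde r_{i-1}^f)P$ distinct evaluations suffice to recover it by Vandermonde interpolation. But condition \eqref{eq:condition_source1} applied at node $i-1$ reads $R_{\tilde\sigma}+\sum_{j\in\mathcal{S}_{i-1}^f}R_j \le f_{i-1}(1-f_i)(1-f_{i+1})$; that is, the number of evaluations needed is at most the number $P f_{i-1}(1-f_i)(1-f_{i+1})$ of collision-free packets that Lemma~\ref{lemma:SI} guarantees arrive from node $i-1$, so $\tilde g$ and $\tilde h^f$ are recovered. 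The backward direction is identical using \eqref{eq:condition_source2}, and the non-source cases use \eqref{eq:condition_nosource1}--\eqref{eq:condition_nosource2}. Propagating this along the line by induction from the source nodes outward --- with the initialization phase seeding the base case and, when delay offsets are nonzero, a one-period buffer absorbing the misalignment --- shows that each relay recovers the symbols it must forward and each destination in $\beta(j)$ recovers $W_j$ with no error. Since the throughput guarantee of Lemma~\ref{lemma:SI} holds for every delay offset, the scheme is zero-error for all $\delta_i$, as required.

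I expect the main obstacle to be the bookkeeping that makes the cancellation legitimate simultaneously in both directions: one must verify that the components node $i$ subtracts ($\tilde h^b$ from the left, $\hat h^f$ from the right) are in fact already available to it, which is exactly where the bi-directional, decode-and-forward nature of the scheme must be tracked carefully, together with the degree accounting that makes the ``needed evaluations $\le$ guaranteed collision-free packets'' inequality coincide term-by-term with \eqref{eq:condition_source1}--\eqref{eq:condition_nosource2}. The boundary nodes $1$ and $M$, where one neighbor and the convention $f_0=f_{M+1}=0$ must be handled, and the clean statement of the induction's base case via the initialization frames, are the places most likely to need care.
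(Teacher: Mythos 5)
Your proof is correct and essentially identical to the paper's: both approximate the duty factors by rationals $n_i/d$, invoke the period-$d^3$ consecutively 3-wise shift-invariant construction (Proposition~\ref{prop:SI}) and Lemma~\ref{lemma:SI} to get a delay-offset-independent count of collision-free packets per period, use identifiability together with the initialization frames to determine offsets and sort packets by sender, decode via the nested Reed--Solomon cancellation-and-interpolation argument whose degree accounting matches \eqref{eq:condition_source1}--\eqref{eq:condition_nosource2} term by term, and conclude with a density/closure argument (the paper perturbs the rate vector toward $(\bar{R}_1,\ldots,\bar{R}_N)$ rather than fixing rational $f_i$ at the outset, a purely cosmetic difference). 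One trivial slip: the boundary convention must set $f_{-2}$, $f_{-1}$, $f_{M+1}$ \emph{and} $f_{M+2}$ to zero, not only $f_0$ and $f_{M+1}$, since the constraints involve indices two hops beyond the ends.
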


(In Theorem~\ref{thm:synchronous}, $f_{-2}$, $f_{-1}$, $f_{M+1}$ and $f_{M+2}$ are set to zero.)

\begin{proof}
Let $(R_1, R_2, \ldots, R_N)$ be a rate vector that satisfies~\eqref{eq:condition_source1} to~\eqref{eq:condition_nosource2} for some real numbers $f_1$, $f_2,\ldots, f_M$ between 0 and~1.

For $i=1,2,\ldots, M$, we approximate $f_i$ by rational $\bar{f}_i = n_i/d$, where $d$ is a common denominator.  Construct $M$ consecutively 3-wise shift-invariant protocol sequences by the construction described at the beginning of this section, with common period $d^3$ and duty factor~$\bar{f}_i$, $i=1,2,\ldots, M$.

Consider the link from node $i$ to node $i+1$.
Node $i+1$ use the initialization mechanism described in paragraph (c) earlier in the section to determine the delay offset pertaining to node~$i$. This mechanism can always find the delay offset because the protocol sequences are consecutively 3-wise shift-invariant, and hence identifiable. After the delay offset of node $i$ is known, node $i+1$ can determine the time indices of the packets sent from node~$i$. By Lemma~\ref{lemma:SI},
node $i+1$ always receives $d^3 \bar{f}_i(1-\bar{f}_{i+1})(1-\bar{f}_{i+2})$ successful packets per slot duration from node~$i$. We recall that by the shift-invariant property, the number of successful packets from node $i$ to node $i+1$ in a period of $d^3$ slot times is a constant.

If node $i$ is a source node, with source $\sigma$ attached to it, the number of information packets from node $i$ to node $i+1$ in a period of $d^3$ time slots is $d^3(R_\sigma + \sum_{j\in \mathcal{S}_i^f} R_j)$. Provided that condition
\begin{equation}
R_\sigma + \sum_{j \in \mathcal{S}_{i}^f} R_j \leq \bar{f}_i(1-\bar{f}_{i+1}) (1- \bar{f}_{i+2}) \label{eq:condition_source1a}
\end{equation}
is satisfied, the channel-network coding scheme described earlier is able to recover the information packets with zero decoding error.
On other hand, if node $i$ is not a source node,
the joint channel-network coding scheme can decode the messages with zero error provided that
\begin{equation} \sum_{j \in \mathcal{S}_{i}^f} R_j \leq \bar{f}_i(1-\bar{f}_{i+1}) (1- \bar{f}_{i+2}). \label{eq:condition_nosource1a}
\end{equation}

The rate constraint for the link from node $i$ to node $i-1$ is either
\begin{align}
R_\sigma + \sum_{j \in \mathcal{S}_{i}^b} R_j &\leq \bar{f}_i(1-\bar{f}_{i-1}) (1- \bar{f}_{i-2})
\label{eq:condition_source2a}
\end{align}
or
\begin{align}
  \sum_{j \in \mathcal{S}_{i}^b} R_j &\leq \bar{f}_i(1-\bar{f}_{i-1}) (1- \bar{f}_{i-2}),
  \label{eq:condition_nosource2a}
\end{align}
depending on whether node $i$ is a source node or not.

Since $\bar{f}_i$ is an approximation to $f_i$, we can find a rate vector $(\bar{R}_1, \ldots, \bar{R}_N)$ close to $(R_1, \ldots, R_N)$, which satisfies~\eqref{eq:condition_source1a} to~\eqref{eq:condition_nosource2a}. The deviation of  $(\bar{R}_1,  \ldots, \bar{R}_N)$ from  $(R_1,  \ldots, R_N)$ depends on the differences between $\bar{f}_i$ and $f_i$, $i=1,2,\ldots, M$.
As we take $\bar{f}_i$ approaching $f_i$ for $i=1,2,\ldots, N$, the corresponding rate vector $(\bar{R}_1, \ldots, \bar{R}_N)$ approaches $(R_1, \ldots, R_N)$. Hence $(R_1, \ldots, R_N)$ is a limit point of a sequence of achievable rate vectors. This proves that $(R_1, \ldots, R_N)$ lies in the achievable rate region.
\end{proof}

\subsection{Extension to Slot-Asynchronous System} \label{sec:achievable_unslotted}

The coding scheme described above can be modified as in~\cite{MasseyMathys85} and operates in slot-asynchronous system with a slight loss of data rates.
The idea is to replace each zero in a protocol sequence by $m$ consecutive zeros, and each one by $m-1$ consecutive ones followed by a single zero. All duty factors are then multiplied by a factor $(m-1)/m$ after this process. The codewords from the joint channel-network coding scheme are interleaved $m-1$ times. For example, if $m=3$, the two protocol sequences $[1\ 0\ 1\ 0]$ and $[1\ 1\ 0\ 0]$ are mapped to two sequences of length~12,
\begin{align*}
&[1\ 1\ 0\ 0\ 0\ 0\ 1\ 1\ 0\ 0\ 0\ 0\ 0] \\
&[1\ 1\ 0\ 1\ 1\ 0\ 0\ 0\ 0\ 0\ 0\ 0\ 0].
\end{align*}

Let $\mathbf{R} = (R_1,\ldots, R_N)$ be a rate vector which is achievable when the system is slot-synchronous. It is shown in~\cite[Lemma 5]{MasseyMathys85} that the resulting transmission scheme is error-free in the slot-asynchronous case with rate $\frac{m-1}{m} \mathbf{R}$. We refer the readers to~\cite{MasseyMathys85} for the details of argument.
Since $m$ can be arbitrarily large, we conclude that the rate vector $\mathbf{R}$ is achievable in the slot-asynchronous case.
We have thus proved the following.

\begin{theorem}
Any rate vector that satisfies the conditions in~Theorem~\ref{thm:synchronous} is also achievable in the slot-asynchronous case.
\label{thm:asyn}
\end{theorem}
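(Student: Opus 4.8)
The plan is to take any $\mathbf{R} = (R_1,\ldots,R_N)$ satisfying the conditions of Theorem~\ref{thm:synchronous}, which is therefore achievable in the slot-synchronous case, and to convert the corresponding slot-synchronous scheme into a slot-asynchronous one by the guard-interval expansion described just above, at the cost of a factor $(m-1)/m$ in every rate. Since the expansion parameter $m$ is free and the slot-asynchronous capacity region is by definition a closure, letting $m \to \infty$ recovers $\mathbf{R}$ itself. The whole argument is a reduction to Massey and Mathys~\cite[Lemma~5]{MasseyMathys85}, adapted from their multiple-access channel to the local three-node interference of the tandem network.

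First I would fix $m$ and apply the substitution (each $0$ becomes $m$ zeros, each $1$ becomes $(m-1)$ ones followed by one guard zero) to every protocol sequence, while interleaving $m-1$ joint channel-network codewords so that the $m-1$ packets emitted in each expanded one-block carry independent coded symbols. The duty factor of node $i$ becomes $\tfrac{m-1}{m}f_i$. The decisive step is to verify that, \emph{for every choice of real delay offsets}, the number of collision-free packets received on each directed link is preserved up to the factor $(m-1)/m$. The mechanism is that a real offset shifts slot boundaries by a fractional amount strictly less than one packet duration; the single guard zero appended to each block of $(m-1)$ transmitted packets absorbs this fractional shift, so a partial overlap can only ever land in a guard slot. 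Consequently, after discarding the guard positions, the surviving transmissions align exactly as in a slot-synchronous system running the \emph{reduced} (unexpanded) sequences with integer offsets.

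Because interference is local --- the throughput from node $i$ to $i+1$ in~\eqref{eq:throughput2} depends only on $s_i$, $s_{i+1}$, $s_{i+2}$ --- this guard-interval bookkeeping can be carried out one consecutive triple at a time, and the consecutively 3-wise shift-invariant structure used in Theorem~\ref{thm:synchronous} is exactly what makes the collision-free count on each triple a constant. Thus Lemma~\ref{lemma:SI} applied to the reduced sequences, together with the guard-slot count, shows that each node still receives at least $\tfrac{m-1}{m}$ times the synchronous number of clean packets per period. The RS-based nested decoder of Theorem~\ref{thm:synchronous} then recovers all messages with zero error at rate $\tfrac{m-1}{m}\mathbf{R}$, since all decodability thresholds scale by the same factor; identifiability and the initialization step that locates the delay offsets go through verbatim, now with real offsets rounded into the guard slots.

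Finally I would run the limiting argument: for each integer $m \ge 2$ the vector $\tfrac{m-1}{m}\mathbf{R}$ is achievable in the slot-asynchronous case, and $\tfrac{m-1}{m}\mathbf{R} \to \mathbf{R}$ as $m \to \infty$. Since $\mathcal{C}_a(N,M,\alpha,\beta)$ is defined as the closure of the set of achievable slot-asynchronous rate vectors, $\mathbf{R}$ lies in it, which is the claim. The main obstacle is the middle step: rigorously establishing that the guard zero neutralizes an arbitrary \emph{real} offset in the three-node partial-overlap regime, i.e. that the only packets lost beyond the synchronous losses are those occupying guard slots. This is precisely the content deferred to~\cite[Lemma~5]{MasseyMathys85}; the only new ingredient is checking that their two-sequence overlap analysis remains valid when a third consecutive sequence also contributes to collisions, which holds because each collision event still involves at most three consecutive nodes.
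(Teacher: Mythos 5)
Your proposal is correct and follows essentially the same route as the paper: the same guard-interval expansion (each $0 \mapsto m$ zeros, each $1 \mapsto m-1$ ones followed by a single guard zero), the $(m-1)$-fold interleaving of the nested codewords, the appeal to Massey and Mathys~\cite[Lemma~5]{MasseyMathys85} for error-free operation at rate $\tfrac{m-1}{m}\mathbf{R}$ under arbitrary real offsets, and the limit $m \to \infty$ combined with the closure in the definition of $\mathcal{C}_a(N,M,\alpha,\beta)$. If anything, you give more detail than the paper itself, which defers the entire partial-overlap analysis --- including the two-to-three-sequence extension you explicitly flag --- to the cited reference.
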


Theorem~\ref{thm:asyn} says that there is essentially no loss of achievable data rates when we compare slot-synchronous and slot-asynchronous system.
Nevertheless, if we approach the boundary of the achievable rate region by increasing the value of $m$, decoding delay also increases.

\section{Outer Bound on Capacity Region} \label{sec:converse}

In this section, we derive an outer bound on the achievable rate region. In fact, we will give an outer bound on achievable rate vectors for slot-synchronous system. This also yields an outer bound for slot-asynchronous system. We recall that a slot-synchronous system is not necessarily a time-slotted system. In the derivation of the outer bound, we do {\em not} assume that the system is  time-slotted.

For $i=1,2,\ldots, M-1$, let $\bar{\mathcal{S}}_{i}^f$ be the set of sources associated with node $i$ or a node to the left of node $i$, and demanded by a node to the right of node~$i$,
\begin{equation}
 \bar{\mathcal{S}}_{i}^f \triangleq \{ j:\, \alpha(j) \leq i, \text{ and }  \beta(j) \ni i' > i\}.  \label{eq:source_left_right2}
\end{equation}
For $i=2,3,\ldots, M$, let $\bar{\mathcal{S}}_{i}^b$ be the set of sources associated with node~$i$ or a node to the right of node~$i$, and demanded by a node to the left of node~$i$,
\begin{equation}
 \bar{\mathcal{S}}_{i}^b \triangleq \{ j:\, \alpha(j) \geq i, \text{ and }  \beta(j) \ni i' < i\}. \label{eq:source_right_left2}
\end{equation}
$\bar{\mathcal{S}}_M^f$ and $\bar{\mathcal{S}}_1^b$ are defined as the empty set.
In contrast to the definition of $\mathcal{S}_i^f$ and $\mathcal{S}_i^b$ in \eqref{eq:source_left_right} and~\eqref{eq:source_right_left}, we have ``$\leq$'' and ``$\geq$'' in~\eqref{eq:source_left_right2} and~\eqref{eq:source_right_left2} instead of strict inequality in~\eqref{eq:source_left_right} and~\eqref{eq:source_right_left};  both $\bar{\mathcal{S}}_{i}^f$ and $ \bar{\mathcal{S}}_{i}^b$ include the source associated with node~$i$. It is easy to see that $\bar{\mathcal{S}}_{i}^f \supseteq  \mathcal{S}_{i}^f$ and
$\bar{\mathcal{S}}_{i}^b \supseteq  \mathcal{S}_{i}^b$.

Given source mapping $\alpha$ and receiver mapping $\beta$ in a tandem collision network with $M$ nodes and $N$ sources,
let
\begin{align}
\mathcal{C}_{out}(M,N,\alpha,\beta) &\triangleq \bigcup \Big\{ (R_1,\ldots, R_N) \in \mathbb{R}_+^N: \notag \\
 \sum_{j \in \bar{\mathcal{S}}_a^f} R_j &\leq  f_{a}(1-f_{a+1})(1-f_{a+2}),  \label{eq:upper_bound1} \\
 \sum_{j \in \bar{\mathcal{S}}_a^b} R_j &\leq  f_{a}(1-f_{a-1})(1-f_{a-2}),  \label{eq:upper_bound2} \\
 &\text{ for } a=1,\ldots, M  \Big\} \notag
\end{align}
with the union taken over all real numbers $0 \leq f_i \leq 1$, $i=1,2,\ldots, M$. (The variables $f_{-2}$, $f_{-1}$, $f_{M+1}$, $f_{M+2}$ are taken to be zero.) The main result in this section is the following.

\begin{theorem}
All achievable rate vectors for slot-synchronous tandem collision network are contained in $\mathcal{C}_{out}(M,N,\alpha,\beta)$.
\label{thm:converse}
\end{theorem}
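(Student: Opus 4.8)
The plan is to establish the two families of cut constraints~\eqref{eq:upper_bound1} and~\eqref{eq:upper_bound2} by a cut-set / information-flow argument at each node, working directly with the deterministic protocol signals but \emph{without} assuming the system is time-slotted. Fix an achievable rate vector $(R_1,\ldots,R_N)$ realized by some protocol signals $\{s_i(t)\}$ with duty factors $f_i$, encoding/decoding functions, and zero decoding error for all delay offsets. For a fixed node $a$, consider the ``forward cut'' that separates $\{1,\ldots,a\}$ from $\{a+1,\ldots,M\}$. Every source $j \in \bar{\mathcal{S}}_a^f$ originates at or to the left of $a$ and is demanded by some node strictly to the right of $a$; hence all the information for these sources must cross the cut, and the only carrier across this cut is the stream of packets sent from node $a$ to node $a+1$. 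The heart of the argument is therefore to bound the number of packets that node~$a$ can deliver \emph{successfully} (collision-free) to node~$a+1$ per unit time.

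First I would count successful receptions at node~$a+1$ on the link $(a,a+1)$. A packet sent by node~$a$ is received cleanly at node~$a+1$ exactly when node~$a$ is transmitting ($s_a=1$), node~$a+1$ is silent so it can listen ($s_{a+1}=0$, forced by half-duplex), and node~$a+2$ does not interfere over the relevant packet window ($s_{a+2}=0$). Over a long horizon $nT$, by choosing the delay offsets adversarially (which is legitimate since achievability must hold for \emph{all} offsets, whereas the converse need only exhibit \emph{one} bad configuration), the fraction of time during which all three conditions hold is at most $f_a(1-f_{a+1})(1-f_{a+2})$; this is precisely the shift-independent value isolated in Lemma~\ref{lemma:SI}, and it serves here as a ceiling on the forward throughput regardless of coding. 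Since each successful packet carries at most one $Q$-ary symbol, the total number of distinct source symbols from $\bar{\mathcal{S}}_a^f$ that can be conveyed across the cut in $n$ packet durations is at most $n\,f_a(1-f_{a+1})(1-f_{a+2})$. Dividing by $n$ and taking $n\to\infty$ yields~\eqref{eq:upper_bound1}. The symmetric ``backward cut'' at node~$a$, carried by the link $(a,a-1)$ and obstructed by $s_{a-1}$ and $s_{a-2}$, gives~\eqref{eq:upper_bound2}. The boundary conventions $f_{-2}=f_{-1}=f_{M+1}=f_{M+2}=0$ make the end nodes consistent.

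To make the counting rigorous I would frame it as a zero-error / entropy inequality: the messages $\{W_j : j\in\bar{\mathcal{S}}_a^f\}$ are independent and uniform, carrying $(\sum_{j\in\bar{\mathcal{S}}_a^f} R_j)\,nT/T$ symbols of information, and each must be recoverable by its right-side destination(s). Because destinations to the right of $a$ observe the network only through packets crossing the cut, a standard cut-set bound forces the joint entropy to be at most the log-cardinality of the set of collision-free symbol sequences transmittable on $(a,a+1)$, which is $Q^{K}$ where $K$ is the number of successful slots. Bounding $K \le n\,f_a(1-f_{a+1})(1-f_{a+2})$ closes the chain. Because every achievable point must satisfy these inequalities for the \emph{same} underlying duty-factor profile $(f_1,\ldots,f_M)$, the point lies in the set defined by~\eqref{eq:upper_bound1}--\eqref{eq:upper_bound2} for that choice of $f_i$, hence in the union $\mathcal{C}_{out}(M,N,\alpha,\beta)$.

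The main obstacle I anticipate is the throughput ceiling in the \emph{general, non-time-slotted} slot-synchronous setting, where Lemma~\ref{lemma:SI} does not apply verbatim. There I cannot invoke shift-invariance; instead I must argue that for \emph{every} protocol-signal design some choice of integer delay offsets drives the collision-free forward fraction down to (at most) $f_a(1-f_{a+1})(1-f_{a+2})$. The natural tool is an averaging argument over all $P$ cyclic integer shifts of the offsets, analogous to the averaging step in the proof of Lemma~\ref{lemma:SI}: the mean collision-free fraction over all shift triples equals $f_a(1-f_{a+1})(1-f_{a+2})$ exactly, so \emph{some} shift achieves a value no larger than this mean, and that worst-case shift is exactly the configuration the adversary may impose. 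Handling the packet-window (rather than slot) overlap condition cleanly, and confirming that the half-duplex constraint indeed forces the listener to be silent over the entire reception window, are the delicate points that require care.
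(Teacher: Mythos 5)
Your proposal matches the paper's proof in essence: the paper likewise reduces to rational periods with a common period $cT$, places a fictitious uniform distribution on integer-multiple-of-$T$ delay offsets (your averaging over all cyclic shifts is the same device), shows the expected collision-free time on link $(a,a+1)$ equals $cT f_a(1-f_{a+1})(1-f_{a+2})$ so that some offset realization is no better, and then closes with a cut-set argument (phrased there via a genie revealing collisions plus the max-flow bound for multicast network codes, which parallels your entropy inequality). The delicate points you flag --- that each successful packet must lie wholly within the collision-free time set, and that the discrete average of $s_i(t-\delta_i)$ over shifts equals the continuous duty factor $f_i$ --- are exactly the steps the paper verifies, using the assumption that protocol signals are one on semi-open intervals of length an integral multiple of~$T$.
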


For a {\em bi-directional} tandem collision network, if node $i$ is a source node associated with source $\sigma$, then $\bar{\mathcal{S}}_i^f = \mathcal{S}_i^f \cup \{\sigma\}$ and
$\bar{\mathcal{S}}_i^b = \mathcal{S}_i^b \cup \{\sigma\}$ for $i=1,2,\ldots, M$. If node $i$ is not a source node, then $\bar{\mathcal{S}}_i^f = \mathcal{S}_i^f$ and $\bar{\mathcal{S}}_i^b = \mathcal{S}_i^b$.
Comparing the outer bound~\eqref{eq:upper_bound1} and~\eqref{eq:upper_bound2} and the rate constraints \eqref{eq:condition_source1} to \eqref{eq:condition_nosource2} in Theorem~\ref{thm:synchronous}, we see that the outer bound $\mathcal{C}_{out}(M,N,\alpha,\beta)$ coincides with the
achievable rate region for slot-synchronous and slot-asynchronous system described in the previous section. We have thus found the capacity region for bi-directional tandem collision network.

\begin{corollary}
For bi-directional tandem collision network, the capacity region in the slot-asynchronous and slot-synchronous case is equal to $\mathcal{C}_{out}(M,N,\alpha,\beta)$.
\label{thm:capacity}
\end{corollary}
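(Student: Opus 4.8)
The plan is to sandwich the capacity region between the achievable inner bound supplied by Theorems~\ref{thm:synchronous} and~\ref{thm:asyn} and the converse outer bound of Theorem~\ref{thm:converse}, and then to check that bi-directionality collapses the two bounds onto the single region $\mathcal{C}_{out}$. Write $\mathcal{C}_a$ and $\mathcal{C}_s$ for the slot-asynchronous and slot-synchronous capacity regions. A first, purely set-theoretic observation is that the admissible delay offsets in the slot-synchronous case (integer multiples of~$T$) form a subset of those in the slot-asynchronous case; hence any encoding/decoding scheme that is error-free for every real delay offset is a fortiori error-free for every integer delay offset, which gives the elementary inclusion $\mathcal{C}_a \subseteq \mathcal{C}_s$.

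The combinatorial core is to recognize the achievability constraints \eqref{eq:condition_source1}--\eqref{eq:condition_nosource2} as the defining inequalities \eqref{eq:upper_bound1}--\eqref{eq:upper_bound2} of $\mathcal{C}_{out}$. For an \emph{interior} source node~$i$ (with $1 \neq \alpha(\sigma) \neq M$), bi-directionality guarantees a demand for source~$\sigma$ on each side, so that $\sigma$ lies in both $\bar{\mathcal{S}}_i^f$ and $\bar{\mathcal{S}}_i^b$; comparing \eqref{eq:source_left_right} and \eqref{eq:source_right_left} with \eqref{eq:source_left_right2} and \eqref{eq:source_right_left2}---whose sole difference is whether the source sitting at node~$i$ is itself counted---then yields $\bar{\mathcal{S}}_i^f = \mathcal{S}_i^f \cup \{\sigma\}$ and $\bar{\mathcal{S}}_i^b = \mathcal{S}_i^b \cup \{\sigma\}$. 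Consequently the left-hand side $R_\sigma + \sum_{j \in \mathcal{S}_i^f} R_j$ of \eqref{eq:condition_source1} equals exactly $\sum_{j \in \bar{\mathcal{S}}_i^f} R_j$, and likewise for the backward and the non-source cases, so the inequality system of Theorem~\ref{thm:synchronous} is identical to the one cutting out $\mathcal{C}_{out}$.

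With this identification in hand I would assemble the chain
\[
\mathcal{C}_{out} \subseteq \mathcal{C}_a \subseteq \mathcal{C}_s \subseteq \mathcal{C}_{out}.
\]
The first inclusion is Theorem~\ref{thm:asyn}: every vector satisfying the hypotheses of Theorem~\ref{thm:synchronous}, i.e.\ every point of $\mathcal{C}_{out}$, is slot-asynchronously achievable. The middle inclusion is the delay-offset monotonicity noted above, and the last is Theorem~\ref{thm:converse}. The three inclusions force $\mathcal{C}_a = \mathcal{C}_s = \mathcal{C}_{out}$, which is the assertion.

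I expect the only delicate point to be the behaviour at the two endpoints, where the clean set identity breaks down. For a source pinned at node~$1$ there is no node to its left, so $\bar{\mathcal{S}}_1^b = \emptyset$ whereas $\mathcal{S}_1^b \cup \{\sigma\} = \{\sigma\}$, and the identity fails. This causes no mismatch, however, because the offending outward inequality is non-binding on both sides: in $\mathcal{C}_{out}$ the constraint \eqref{eq:upper_bound2} at $a=1$ has an empty left-hand sum and is vacuous, while in Theorem~\ref{thm:synchronous} the outward constraint \eqref{eq:condition_source2} at $i=1$ reduces, under the convention that absent nodes have zero duty factor, to $R_\sigma \le f_1$, which is dominated by the genuine inward constraint $R_\sigma \le f_1(1-f_2)(1-f_3)$. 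The symmetric remark applies at node~$M$. A final routine check is that $\mathcal{C}_{out}$, being a union of polytopes over the compact cube $\{(f_1,\ldots,f_M): 0 \le f_i \le 1\}$ with inequalities depending continuously on the~$f_i$, is closed, so that it genuinely equals the capacity region---defined as a closure of achievable vectors---consistently with the limiting rational-approximation construction used in the proof of Theorem~\ref{thm:synchronous}.
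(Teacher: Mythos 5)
Your proof is correct and follows essentially the same route as the paper: identify the inequality system of Theorem~\ref{thm:synchronous} with the defining constraints of $\mathcal{C}_{out}(M,N,\alpha,\beta)$ via bi-directionality, then sandwich $\mathcal{C}_{out} \subseteq \mathcal{C}_a \subseteq \mathcal{C}_s \subseteq \mathcal{C}_{out}$ using Theorem~\ref{thm:asyn} for the inner bound, the delay-offset monotonicity for the middle inclusion, and Theorem~\ref{thm:converse} for the outer bound. Your endpoint check is in fact a small improvement on the paper, whose blanket claim $\bar{\mathcal{S}}_i^b = \mathcal{S}_i^b \cup \{\sigma\}$ for all $i=1,\ldots,M$ fails at $i=1$ (where $\bar{\mathcal{S}}_1^b = \emptyset$ by definition); your observation that the stray constraint $R_\sigma \le f_1$ is dominated by the forward constraint $R_\sigma \le f_1(1-f_2)(1-f_3)$, together with the closedness check on $\mathcal{C}_{out}$, patches these glossed-over details.
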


\begin{proof}[Proof of Theorem~\ref{thm:converse}]
The proof roughly follows the same line as in~\cite[Section IV]{MasseyMathys85}.
We assume that for all $i$, the period $P_i$ of protocol signal $s_i(t)$ is a rational multiple of the duration of a packet, i.e., $P_i = (a_i/b_i) T$ for some integers $a_i$ and $b_i$, where $T$ denotes packet duration. There is no loss of generality in this assumption, because we can approximate any real number by rational numbers with arbitrarily small error.

Let $c$ be the least common multiple of $a_1, a_2, \ldots, a_M$. Then $cT$ is a common period of all protocol signals, because
\[
  s_i(t + cT) = s_i(t + b_i(c/a_i)P_i) = s_i(t)
\]
for all $t$ and for all $i$. In order to simplify notation, it is convenient to define $s_0(t) = s_{M+1}(t) = 0$ for all $t$.

In the slot-synchronous system model, the delay offsets $\delta_1, \delta_2, \ldots, \delta_M$ are fixed  integral multiples of~$T$ in the channel model. For the purpose of our proof, we impose a fictitious probability distribution on the delay offsets, and assume that $\delta_1, \delta_2, \ldots, \delta_M$ are independent random variables, uniformly distributed over $\{0, T, 2T, \ldots, (c-1)T\}$.

Suppose  the duty factor of $s_i(t)$, defined in~\eqref{eq:def_duty_factor}, is equal to $f_i$, for $i=1,2,\ldots, M$.
We want to prove the following claim:  for all $t$, we have
\begin{equation}
 E_{\delta_i}[ s_i(t-\delta_i) ] = f_i, \label{eq:mean_of_s}
\end{equation}
with the expectation taken over $\delta_i$. We note that the left hand side of the above equation is a discrete sum, while the right hand side is an integration. The definition of $f_i$ implies that within a period of $cT$, there are $cf_i $ non-overlapping semi-open intervals of length $T$, in which $s_i(t)$ is equal to~1. Let $t_0$ be a fixed real number between 0 and $cT$, and consider the following set of $c$ time instants
\begin{equation}\{t_0, t_0-T, \ldots, t_0-(c-1)T\}, \label{eq:P_points}
\end{equation}
with the subtraction performed modulo~$cT$.
This is a set of $c$ evenly spaced points in $[0, cT)$.
Because the length of each semi-open interval is $T$, each of these $c f_i$ intervals  contains exactly one time instant in~\eqref{eq:P_points}. If we evaluate $s_i(t)$ at the time instants in~\eqref{eq:P_points}, exactly $cf_i$ of them equal one. At the remaining time instants, the values of $s_i(t)$ are zero. Thus, we get
\[
\sum_{k=0}^{c-1} s_i(t_0 - k T) = cf_i.
\]
Hence,
\[
 E_{\delta_i}[ s_i(t_0-\delta_i) ] = \frac{1}{c}\sum_{k=0}^{c-1} s_i(t_0 - k T) = f_i.
\]
This completes the proof of the claim.

Consider an arbitrary semi-open time interval $[t_0, t_0+cT)$, for some constant $t_0$. Within this semi-open time interval, packets can be sent from node $i$ to node $i+1$ only when node $i$ is in transmit mode, node $i+1$ is in receive mode, and node $i+2$ is not transmitting anything. Let $\mathcal{T}_{i,i+1}$ be a subset of $[t_0, t_0+cT)$ containing time instants that satisfy
\[
\begin{cases}
s_i(t-\delta_i) = 1 \\
s_{i+1}(t-\delta_{i+1}) = 0 \\
s_{i+2}(t-\delta_{i+2}) = 0
\end{cases}
\]
or equivalently,
\begin{equation}
 s_i(t-\delta_i)(1 - s_{i+1}(t -\delta_{i+1})) (1 - s_{i+2}(t -\delta_{i+2})) =1.
 \label{eq:non_collide}
\end{equation}
The set $\mathcal{T}_{i,i+1}$ is the union of some non-overlapping sub-intervals of $[t_0, t_0+cT)$. We define the length of $\mathcal{T}_{i,i+1}$ as the summation of the length of the constituent sub-intervals. Every non-collided packet from node $i$ to node $i+1$ must be transmitted totally within~$\mathcal{T}_{i,i+1}$. Otherwise, it partially overlaps with other packets and is lost due to collision. The total number of non-collided packets from node $i$ to node $i+1$ is thus no larger than the length of $\mathcal{T}_{i,i+1}$ divided by~$T$. We remark that the number of non-collided packets may be strictly less than the length of $\mathcal{T}_{i,i+1}$ divided by~$T$, because we only assume slot-synchronous system, which may not be time-slotted.

By the independence of $\delta_i$, $\delta_{i+1}$ and $\delta_{i+2}$, the expected value of the left hand side of~\eqref{eq:non_collide}, over random variables $\delta_i$, $\delta_{i+1}$ and $\delta_{i+2}$, is
\begin{align*}
&\phantom{=} E \big[  s_i(t-\delta_i)(1 - s_{i+1}(t -\delta_{i+1})) (1 - s_{i+2}(t -\delta_{i+2})) \big] \\
&=E \big[  s_i(t-\delta_i)\big] E\big[(1 - s_{i+1}(t -\delta_{i+1}))\big] \\
& \qquad \cdot E\big[ (1 - s_{i+2}(t -\delta_{i+2})) \big] \\
&= f_i(1-f_{i+1}) (1-f_{i+2}).
\end{align*}
The last equality follows from~\eqref{eq:mean_of_s}.

Let $\mathbb{I}(x)$ be the indicator function,
\[
 \mathbb{I}(x) = \begin{cases}
   1 & \text{if $x$ is true}\\
   0 & \text{otherwise}.
 \end{cases}
\]
The expected length of $\mathcal{T}_{i,i+1}$, taken over random variables $\delta_i$, $\delta_{i+1}$ and $\delta_{i+2}$, is thus
\begin{align*}
 E\Big[ \int_{\mathcal{T}_{i,i+1}} dt \Big]
&= E\Big[ \int_{t_0}^{t_0 + cT} \mathbb{I}(t \in \mathcal{T}_{i,i+1}) \, dt \Big]\\
&=  \int_{t_0}^{t_0 + cT} E\Big[ \mathbb{I}(t \in \mathcal{T}_{i,i+1})  \Big]\, dt \\
&= \int_{t_0}^{t_0 + cT}  f_i(1-f_{i+1}) (1-f_{i+2}) \, dt \\
&= cT f_i(1-f_{i+1}) (1-f_{i+2}).
\end{align*}
We can find some realization of the random variables $\delta_i$, $\delta_{i+1}$ and $\delta_{i+2}$ such that the length of $\mathcal{T}_{i,i+1}$ is less than or equal to the expected value $c T f_i(1-f_{i+1}) (1-f_{i+2})$. Therefore, there are  some specific values of $\delta_i$, $\delta_{i+1}$ and $\delta_{i+2}$ such that
the number of non-collided packets from node $i$ to node $i+1$ in a duration of $cT$ seconds is no more than $c f_i(1-f_{i+1}) (1-f_{i+2})$.

Suppose that there is a friendly genie who in advance informs nodes $i$ and $i+1$  which packets will be collided, and which packets will be received successfully.
Consider a cut of the network by the edge $(i,i+1)$. The decoding of the sources in $\bar{\mathcal{S}}_i^f$ by nodes $i+1$, $i+2, \ldots, M$ is a function of the packets from node $i$ to node~$i+1$.  With the help of the genie, the link from node $i$ to node $i+1$ reduces to a noiseless discrete memoryless channel with packet rate less than or equal to $f_i(1-f_{i+1}) (1-f_{i+2})$ packets per slot duration.
By the max-flow bound for multicast network codes~\cite[Chapter10]{Kramer}~\cite[Chapter 21]{Yeung08}, we deduce that if $(R_1,\ldots, R_N)$ is achievable no matter what the delay offsets are, then
\[
\sum_{j \in \bar{\mathcal{S}}_i^f} R_j \leq f_i(1-f_{i+1}) (1-f_{i+2}).
\]
A rate vector is not achievable with the help of genie is certainly not achievable in the presence of genie.
This proves the rate constraint in~\eqref{eq:upper_bound1}.

The derivation of~\eqref{eq:upper_bound2} for traffic in the backward direction is similar as above,  with $i+1$ and $i+2$ replaced by $i-1$ and $i-2$, respectively.

The above argument holds for any fixed duty factors. The outer bound follows by taking the union over all duty factors.
\end{proof}

\section{Performance Comparison} \label{sec:example}

We compare the capacity region with three random access schemes. The first two schemes do not have the network coding feature, while the third one does.
In all these three schemes, each node maintains two queues, one for incoming packets from the left and one for packets from the right.
We consider the heavy traffic scenario and assume that the queues are saturated for simplified analysis

In the first scheme, each node sends packets  in the fashion of pure ALOHA, and we will call this {\em uncoded pure ALOHA} scheme. Transmission of the nodes are independent from each other. We adopt a simplifying assumption that the transmission times of packets of node $i$ follow a Poisson process with intensity $\lambda_i$ packets per packet time.
We assume that each packet has a header which stores the identity of the sender and packet numbers. Packet number of collided packets are piggybacked to the transmitting node for re-transmission. The overhead due to packet header is neglected. Consider a source node, say node~$i$, that is associated with source~$\sigma$. When a transmission is initiated, node~$i$ transmits a source packet with probability $p^s_i$, a packet to be relayed to the left with probability $p^\ell_i$ and a packet to be relayed to the right with probability $p^r_i$, where $p^s_i$, $p^\ell_i$ and $p^r_i$ are non-negative real numbers such that $p^s_i+p^\ell_i+p^r_i=1$.

We consider the transmission of a source packet  successful if it is received successfully by both node $i-1$ and node~$i+1$.
When a packet is transmitted by node $i$ at time $t_0$, it will be received by node $i-1$ and $i+1$ if nodes $i-2$, $i-1$, $i+1$ and $i+2$ do not transmit any packet in the time interval $[t_0-T, t_0+T]$ of length $2T$.
This leads to the following rate constraint,
\begin{equation}
 R_\sigma  \leq  p^s_i \lambda_i e^{-2 (\lambda_{i-2}+ \lambda_{i-1}+ \lambda_{i+1}+ \lambda_{i+2})}. \label{eq:pALOHA1}
\end{equation}
(We define $\lambda_{-2}$, $\lambda_{-1}$, $\lambda_{M+1}$ and $\lambda_{M+2}$ to be zero.)
In order to forward packets to nodes $i+1$ and $i-1$ with rates  $\sum_{j\in \mathcal{S}_i^f} R_j $ and $\sum_{j\in \mathcal{S}_i^b} R_j$ respectively, it is required that
\begin{align}
 \sum_{j\in \mathcal{S}_i^f} R_j & \leq p^r_i \lambda_i e^{-2(\lambda_{i+1}+\lambda_{i+2})} \label{eq:pALOHA2}\\
 \sum_{j\in \mathcal{S}_i^b} R_j & \leq p^\ell_i \lambda_i e^{-2(\lambda_{i-1}+\lambda_{i-2})}. \label{eq:pALOHA3}
\end{align}
For a node which is not a source node, the rate requirement is the same as~\eqref{eq:pALOHA2} and~\eqref{eq:pALOHA3}, with $p^\ell_i+p^r_i = 1$.

In the second scheme, called {\em uncoded slotted ALOHA}, slot-synchronization is assumed. Node $i$ transmits a packet in a time slot with probability $f_i$. The protocol is similar to the uncoded pure ALOHA scheme. For a source $\sigma$ which is associated with node $i$, we have the following rate requirement,
\begin{equation}
 R_\sigma \leq  p^s_i f_i(1-f_{i-2})(1-f_{i-1})(1-f_{i+1})(1-f_{i+2}) \label{eq:sALOHA1}
\end{equation}
The rate constraints for forward and backward traffic at node $i$ are
\begin{align}
 \sum_{j\in \mathcal{S}_i^f} R_j & \leq p^\ell_i f_i(1-f_{i+1})(1-f_{i+2})  \label{eq:sALOHA2}\\
 \sum_{j\in \mathcal{S}_i^b} R_j & \leq p^r_i f_i(1-f_{i-1})(1-f_{i-2}).  \label{eq:sALOHA3}
\end{align}
(We define $f_{-2}$, $f_{-1}$, $f_{M+1}$ and $f_{M+2}$ as zero.) For a non-source node $i$, we have  two rate constraints as in~\eqref{eq:sALOHA2} and~\eqref{eq:sALOHA3} with $p^\ell_i + p^r_i = 1$.

The third scheme, which is described in~\cite{Sagduyu06}, is a random access scheme with network coding. It is assumed that the system is time-slotted and slot-synchronous, and node $i$ transmits a packet in a time slot with some fixed probability~$f_i$. When a transmission is initiated at node $i$, with probability $p^s_i$ a source packet is transmitted , and  with probability~$1-p^s_i$ the XOR of two packets from opposite direction to be relayed through node $i$ is transmitted. If source $\sigma$ is associated with node $i$, we have
\begin{equation}
 R_\sigma \leq  p^s_i f_i(1-f_{i-2})(1-f_{i-1})(1-f_{i+1})(1-f_{i+2}) , \label{eq:csALOHA1}
\end{equation}
and
\begin{align}
 \sum_{j\in \mathcal{S}_i^f} R_j & \leq (1-p^s_i) f_i(1-f_{i+1})(1-f_{i+2})  \label{eq:csALOHA2}\\
 \sum_{j\in \mathcal{S}_i^b} R_j & \leq (1-p^s_i) f_i(1-f_{i-1})(1-f_{i-2}).  \label{eq:csALOHA3}
\end{align}
For non-source node $i$, the rate constraints are \eqref{eq:csALOHA2} and~\eqref{eq:csALOHA3} with $p_i^s$ set to zero.
We call the third scheme {\em network-coded slotted ALOHA}.

\subsection{Example 1: Two-way Tandem Network}

By Theorem~\ref{thm:capacity}, the capacity region for the two-way tandem network in Example~1 consists of the rate pairs that satisfy
\begin{align}
R_1 &\leq \min_{i=1,2,3} \{ f_i (1-f_{i+1})(1-f_{i+2}) \} \label{eq:twoway_R1} \\
R_2 &\leq \min_{i=2,3,4} \{f_i (1-f_{i-1})(1-f_{i-2})  \} \label{eq:twoway_R2}
\end{align}
for some duty factors $f_1, f_2, f_3, f_4 \in [0,1]$. ($f_0$ and $f_5$ in~\eqref{eq:twoway_R1} and~\eqref{eq:twoway_R2} are set to 0.) The inequality in~\eqref{eq:twoway_R1} is deduced from the requirement that $R_1$ is less than all rate constraints in the forward direction, and~\eqref{eq:twoway_R2} corresponds to the backward direction. The achievable rate region is
plotted in Fig.~\ref{fig:graph}.

We can verify that the point corresponding to $R_1=0$ and $R_2=1/3$ is achievable by putting $f_1 = 0$, $f_2=1/3$, $f_3 = 1/2$ and $f_4=1$ in~\eqref{eq:twoway_R2},
\[
 R_2 = \frac{1}{3}= \min\Big\{ 1\cdot \frac{1}{2} \cdot \frac{2}{3},\
 \frac{1}{2}\cdot \frac{2}{3} \cdot 1, \ \frac{1}{3}\cdot 1 \Big\}.
\]
By setting $f_1=f_2=f_3 = f_4 = 1/3$, we can check that maximal symmetric rate is $4/27$,
\[
R_1=R_2 = \frac{4}{27} = \min\Big\{\frac{1}{3}\cdot\frac{2}{3}\cdot\frac{2}{3},\  \frac{1}{3}\cdot\frac{2}{3}\cdot\frac{2}{3},\  \frac{1}{3}\cdot\frac{2}{3} \Big\} .
\]
Thus the point $(0.1481, 0.1481)= (4/27, 4/27)$ is the maximal symmetric rate pair in the achievable rate region.

%

\begin{figure}
\begin{center}
  \includegraphics[width=3.5in]{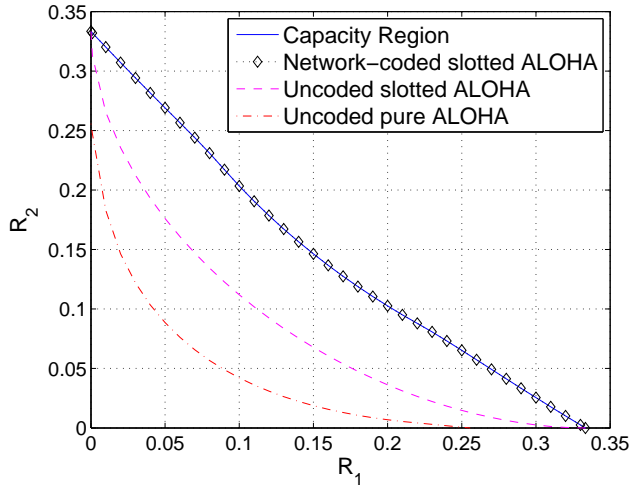}
\end{center}
\caption{The Achievable Rate Region for a Four-Node Two-Way Tandem Collision Network}
\label{fig:graph}
\end{figure}

The achievable rate region by the network-coded slotted ALOHA scheme is {\em identical} to the capacity region defined by~\eqref{eq:twoway_R1} and~\eqref{eq:twoway_R2}. This feature holds for all two-way tandem networks, i.e., only the two end nodes are source nodes, while the nodes in the middle are non-source nodes.

We can also observe that all rate regions are not convex. This is because time-sharing requires inter-node coordination in the multiple-access layer, which is not available in our setting.

The achievable rate regions for uncoded pure ALOHA and uncoded slotted ALOHA are plotted as the dashed lines in Fig.~\ref{fig:graph}.
The symmetric rate achieved by uncoded pure ALOHA and uncoded slotted ALOHA are 0.0678 and 0.1058, respectively, which are much smaller than the maximal symmetric rate 0.1481 achieved by the scheme proposed in this paper and by network-coded slotted ALOHA.

We observe from Fig.~\ref{fig:graph} that the two extreme points  $(0,1/3)$ and $(1/3,0)$ are achievable by all schemes described in this paper except the uncoded pure ALOHA scheme.


\subsection{Example 2: Bi-directional Multicast Network}

In this example, the rate region achieved by network-coded slotted ALOHA is strictly smaller than the capacity region.

By Theorem~\ref{thm:capacity}, the capacity region consists of rate pairs $(R_1,R_2)$ that satisfy
\begin{align*}
R_1 & \leq f_2(1-f_3)(1-f_4) \\
R_1+ R_2 & \leq f_2(1-f_1) \\
R_1 & \leq f_3(1-f_4)(1-f_5) \\
R_2 & \leq f_3(1-f_2)(1-f_1) \\
R_2 & \leq f_4(1-f_3)(1-f_2) \\
R_1+ R_2 & \leq f_4(1-f_5)
\end{align*}
for some duty factors $f_1, f_2, f_3, f_4, f_5 \in [0,1]$. It is easy to see that we can set $f_1=f_5=0$ in the above inequalities without affecting the result. The set of rate constraints that defines the capacity region reduces to
\begin{align}
R_1 & \leq \min\{f_2(1-f_3)(1-f_4), f_3(1-f_4) \} \\
R_1 & \leq \min\{f_4(1-f_3)(1-f_2), f_3(1-f_2) \} \\
R_1+R_2 & \leq \min\{f_2,f_4 \}
\end{align}
where $f_2$, $f_3$ and  $f_4$ are real numbers between zero and one. The capacity region is plotted in Fig.~\ref{fig:graph2}. The maximal symmetric rate is 0.1716.

The reason why the network-coded slotted ALOHA has smaller throughput than our proposed scheme is that the source packets and on-going traffic are treated separately. Nested coding using RS code is capable of encoding packets in an effective way.

As in the previous example, we observe that the two extreme points  $(0,1/3)$ and $(1/3,0)$ are achievable by all schemes except the uncoded pure ALOHA scheme. When two sources are active, uncoded pure ALOHA, uncoded slotted ALOHA and network-coded slotted ALOHA are all suboptimal.\footnote{For the network-coded slotted ALOHA scheme, three different treatments of source packets are presented
in~\cite{Sagduyu06}.  All of them turn out to yield the same achievable rate region in this example.} Unlike the previous example, the network-coded slotted ALOHA scheme is suboptimal in this cases.
The main functionality difference between the network-coded slotted ALOHA scheme and the capacity-achieving scheme is that there is no channel coding in the network-coded slotted ALOHA scheme. Thus, we see that nested coding is essential in improving throughput.

\begin{figure}
\begin{center}
  \includegraphics[width=3.5in]{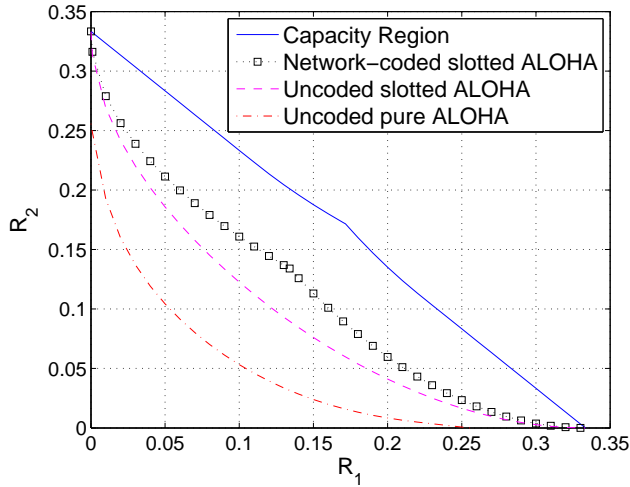}
\end{center}
\caption{The Capacity Regions for the Bi-directional Sensor Network in Example 2}
\label{fig:graph2}
\end{figure}

\section{Conclusion} \label{sec:conclusion}

In this paper we present a transmission scheme for tandem collision network.
The construction involves three different ideas: shift-invariant protocol sequences for multiple-access, nested coding for erasure correction, and network coding for bi-directional information flow. The resulting transmission system does not require any coordination or synchronization among the nodes, and yet is able to achieve optimal throughput in bi-directional tandem collision network. For wireless sensor networks, achieving time synchronization is costly. The proposed scheme can alleviate this problem and operate slot-asynchronously. Numerical examples show that the protocol-signal approach can achieve higher data rate than random-access transmission schemes. A key feature of the proposed transmission that makes it superior to other random-access schemes is erasure-correcting coding. The regularity of channel accessing pattern dictated by protocol sequences make it easy to incorporate channel coding. This feature is not available if the channel accessing is stochastic.

\appendices

\section{Proof of Identifiability} \label{app:identify}
Suppose that the protocol sequences are consecutively 3-wise shift-invariant. Also, suppose $\tau'_{i-1}$ and $\tau'_{i+1}$ are chosen such that the associated channel activity signal $c'[k]$ is the same as $c[k]$ for all $k$.

The number of packets declared to be sent from node $i-1$ is  $P(1-f_i)f_{i-1}(1-f_{i+1})$, because the number of time indices $k$ such that
\[
 c'[k] = s_{i-1}[k-\tau'_{i-1}] = 1
\]
is equal to the throughput $\theta_{i-1,i}( \tau'_{i-1}, \tau_i, \tau'_{i+1})$, which is a constant by the shift-invariant property. Similarly, the number of packets declared to be sent from nodes $i+1$ is equal to
 $P(1-f_i)(1-f_{i-1})f_{i+1}$.
Also, the number of symbol ``*'' in the channel activity signal $c'[k]$ is equals $P(1-f_i)f_{i-1} f_{i+1}$.

In order to show that the sender-identifying algorithm described in Section~\ref{sec:achievable} always yields the correct answer, we consider four cases.

(a) $\tau_{i-1}' = \tau_{i-1}$ and $\tau_{i+1}' = \tau_{i+1}$. The identity of the packets returned by the algorithm is clearly correct in this case.

(b) $\tau_{i-1}' = \tau_{i-1}$ and $\tau_{i+1}' \neq \tau_{i+1}$.
If $c'[k] = 1$ for some $k$, then by construction, either $s_{i-1}[k+\tau_{i-1}] = 1$ or $s_{i+1}[k+\tau'_{i+1}] = 1$
The packet at time $\tilde{k}$ is decided to be sent from node $i-1$ if
\begin{equation}
 c[\tilde{k}] = s_{i-1}[\tilde{k}+\tau_{i-1}] = 1.
 \label{eq:identify1}
\end{equation}
Thus the decisions for the packets that are declared to be from node $i-1$ are all correct. We can find exactly $P(1-f_i)f_{i-1}(1-f_{i+1})$ such $\tilde{k}$ that satisfies~\eqref{eq:identify1} by the shift-invariant property. The remaining $P(1-f_i)(1-f_{i-1})f_{i+1}$ successfully received packets cannot be sent from node $i-1$, and so must be sent from node~$i+1$. Incidently, they are all declared to be sent from node $i+1$, and hence there is no identification error.

(c) $\tau_{i-1}' \neq \tau_{i-1}$ and $\tau_{i+1}' = \tau_{i+1}$. The argument is similar to part (b), with $i-1$ and $i+1$ exchanged.

(d) $\tau_{i-1}' \neq \tau_{i-1}$ and $\tau_{i+1}' \neq \tau_{i+1}$.
If the output of the algorithm was incorrect, then a packet from node $i-1$ would be mistakenly decided to be from node $i+1$, and a packet from node $i+1$ would be mistakenly decided from node~$i-1$. In this case, the number of symbol ``*'' contained in the channel activity signal when the delay offsets of nodes $i-1$ and $i+1$ were $\tau_{i-1}$ and $\tau_{i+1}'$, would be strictly larger than $P(1-f_i)f_{i-1} f_{i+1}$. This contradicts the assumption that the protocol sequences are shift-invariant.




\end{document}